\documentclass[a4paper,11pt]{article}
\usepackage{amsmath, amssymb}
\usepackage{amsthm}
\usepackage[OT1]{fontenc}
\usepackage{url}
\usepackage[margin=1.1in]{geometry}
\usepackage{color}
\DeclareMathOperator*{\argmin}{arg\,min}

\newtheorem{theorem}{Theorem}

\newtheorem{lemma}[theorem]{Lemma}

\newtheorem{remark}[theorem]{Remark}

\newtheorem{assumption}[theorem]{Assumption}
\numberwithin{equation}{section}
\newcommand{\sign}{\text{sign}}
\title{  Constrained Quadratic Risk Minimization via Forward and Backward Stochastic Differential Equations}
\date{}
\author{Yusong Li\footnote{Department of Mathematics, Imperial College, London SW7 2BZ, UK.  Email: y.li11@imperial.ac.uk}\; and Harry Zheng\footnote{Department of Mathematics, Imperial College, London SW7 2BZ, UK.  Email: h.zheng@imperial.ac.uk}}

\begin{document}

\maketitle

\begin{abstract}
In this paper we study a continuous-time stochastic linear quadratic control problem arising from mathematical finance. We model the asset dynamics with  random market coefficients and portfolio strategies with convex constraints. 
Following the convex duality approach, we  show that the necessary and sufficient optimality conditions for both the primal and dual problems can be written in terms of processes satisfying a system of FBSDEs together with other conditions. We  characterise explicitly the optimal wealth and portfolio processes as  functions of adjoint processes from the dual FBSDEs in a dynamic fashion and vice versa. 
We apply the results  to solve quadratic risk minimization problems with  cone-constraints and derive the explicit representations of solutions to the extended stochastic Riccati equations  for such problems.
\end{abstract}

\noindent\textbf{Keywords}: convex duality, primal and dual FBSDEs,   stochastic linear quadratic control, random coefficients, control constraints

\bigskip
\noindent\textbf{AMS MSC2010}: 91G80,\ 93E20,\ 49N05,\ 49N15

%%%%%%%%%%%%%%%%%%%%%%%%%%%%%%%%%%%%%%%%%%%%%%%%%%%%%%%%%%%%%%%%%
\section{Introduction}
In this paper we study a stochastic control problem arising from mathematical finance. The goal is to minimize a convex   cost function that is quadratic in both the wealth process and portfolio strategy in a continuous time complete market with random market parameters and portfolio constraints. Problems of this kind arise  naturally in financial applications. We assume that the portfolio must take value in a given closed convex set which is general enough to model  short selling, borrowing,   and other trading restrictions, see \cite{karatzasshreve:mathfinance}.

There are vast literatures on stochastic linear quadratic (SLQ) optimal control and its applications on mean variance portfolio selection problems, see \cite{schweizer:mvhedging, yong.zhou:stochasticcontrols}  and references therein.  In the absence of portfolio constraints, using the stochastic maximum principle, one can solve the SLQ problem by  deriving the optimal control as a linear feedback control of the state and proving the existence and uniqueness of a solution to the resulting stochastic Riccati equation (SRE). When there are no control constraints,  the feedback control constructed from the solution of the SRE is automatically admissible, see \cite{zhoulim:MVrandom} for an example of this method to problems with random coefficients but no portfolio constraints. When there are control constraints, the optimal control is no longer a simple linear feedback control of the state and   the SRE method becomes much more difficult and  subtle.  
\cite{huzhou:constrainedmv} shows the solvability of  an extended SRE for constrained SLQ problems with random coefficients.

For convex SLQ problems, it is also natural to use the convex duality method that has been extensively applied to solve utility maximization problems in mathematical finance, see \cite{KS99, KS03} and reference therein. When there are no control constraints and the filtration is generated by driving Brownian motions, one may first convert the original  dynamic optimization problem into an equivalent static one, then formulate and solve the static dual problem, and use the dual relation and the martingale property to find the optimal state process for the original problem, finally use the martingale representation theorem to find a replicating portfolio which is the optimal control process. When there are control constraints, the duality method becomes much more complicated. \cite{karatzasshreve:mathfinance} introduces and solves a family of auxiliary unconstrained  problems and shows one of them solves the original constrained problem.
\cite{heunislabbe:constrainedMV} applies the convex duality approach,  inspired by \cite{bismut:convexdual, rogers:constrainteddual}, to solve a mean-variance problem with both random coefficients and portfolio constraints and shows the existence of an optimal solution to the dual problem and constructs the optimal wealth process with the optimal dual solution and the optimal portfolio process with the martingale representation theorem. \cite{Czichowsky:convexdualconstraints} provides a comprehensive treatment on mean-variance hedging under convex trading constraints in a general semi-martingale setting. It establishes the closedness of the set of all replicable terminal wealth under trading constraints in some square integrable sense
% (highly technical, the second author of this paper has also worked in this area, see \cite{cwz}) 
and subsequently the existence of a solution to mean-variance hedging problems, and extends results linking the primal and dual problems obtained previously by other authors, see detailed discussions in \cite[Section 5.3]{Czichowsky:convexdualconstraints}.

\cite{oksendal:robustduality} extends the results of \cite{KS03} to a dynamic setting and proves a close relation between optimal solutions and adjoint processes obtained from forward backward  stochastic differential equations (FBSDEs). Specifically, it is shown that 
the optimal primal wealth and portfolio processes can be expressed as functions of  the optimal adjoint processes of the dual problem and vice versa. This demystifies the opaque relation of the optimal solutions of the primal and dual problems in utility maximization, i.e.,  given the solution of the dual problem, the optimal control of the primal problem can only be derived from the martingale representation theorem. There are no control constraints in \cite{oksendal:robustduality} but the asset price process is a general semi-martingale process with some technical conditions.

Inspired by the work of \cite{oksendal:robustduality}, we use the convex duality method to solve the quadratic risk minimization problem with both random coefficients and control constraints. To get a correct formulation of the dual problem, we follow the approach of \cite{heunislabbe:constrainedMV}  by first converting the original problem into a static problem in an abstract space, then applying convex analysis to derive its dual problem, and finally getting a specific dual stochastic control problem. It turns out there are three controls in the dual problem, one corresponds to the control constraint set, one to the running cost function, and one to the no-duality-gap relation. Using  FBSDEs, we derive the  necessary and sufficient conditions for both primal and dual problems, which allows us to explicitly characterise the primal control as a function of the adjoint process coming from the dual FBSDEs in a dynamic fashion and vice versa, similar to those in 
\cite{oksendal:robustduality}.  Moreover, we also find that the optimal primal wealth process coincides with the optimal adjoint process of the dual problem and vice versa.
To the best of our knowledge, this is the first time the dynamic relations of  primal and dual problems with control constraints have been explicitly characterized  in terms of solutions to their corresponding FBSDE systems.

 After establishing the optimality conditions for both primal and dual problems, we solve a quadratic risk minimization problem with cone-constraints.  Instead of attacking the primal problem directly, we start from the dual problem and then  construct the optimal solution to the primal problem from that of the dual problem. Moreover, we  derive the explicit representations of solutions to the extended SREs introduced in  \cite{huzhou:constrainedmv}   in terms of the optimal solutions  from the dual problem. The simplicity in solving the dual problem is in good contrast to the technical complexity  in solving the extended SREs directly, as discussed in  \cite{huzhou:constrainedmv}. In addition,  we show that when the coefficients are deterministic, the closed form optimal solution to the dual problem can be constructed.

The rest of the paper is organised as follows. In Section 2 we   set up the model and formulate the primal and dual problems following the approach in \cite{heunislabbe:constrainedMV}. In Section 3 we  characterise the necessary and sufficient optimality conditions for both the primal and dual problems and   establish their connection  in a dynamic fashion through  FBSDEs.  In Section 4 we discuss quadratic risk minimization problems with cone constraints and demonstrate how to construct explicitly the solutions of the extended SREs from those of the dual FBSDEs. 
In Section 5 we prove the main results. Section 6 concludes.

%%%%%%%%%%%%%%%%%%%%%%%%%%%%%%%%%%%%%%%%%%%%%%%%%%%%%%%%%%%%%%%%%
\section{Market Model and Primal and Dual Problems}
Through out the paper, we denote by $ T>0$ a fixed terminal time,  $\lbrace W(t),t\in[0,T] \rbrace$ a  $ \mathbb{R}^N $-valued standard Brownian motion with scalar entries $ W_m(t)$, $m=1,\cdots,N $, on a complete probability space $ (\Omega,\mathcal{F},\mathbb{P}) $, 
$\{\mathcal{F}_t\}$ the $\mathbb{P}$-augmentation of the filtration $\mathcal{F}_t^W=\sigma( W(s),0\leq s\leq t ) $ generated by $W$, 
 ${\cal P}(0,T; \mathbb{R}^N)$ the set of all $\mathbb{R}^N$-valued progressively measurable processes on $[0,T]\times \Omega$, ${\cal H}^2(0,T; \mathbb{R}^N)$ the set of processes $x$ in ${\cal P}(0,T; \mathbb{R}^N)$ satisfying 
$E[\int_0^T |x(t)|^2dt]<\infty$, and ${\cal S}^2(0,T; \mathbb{R}^N)$ the set of processes $x$ in ${\cal P}(0,T; \mathbb{R}^N)$ satisfying 
$E[\sup_{0\leq t\leq T} |x_t^2|]<\infty$. 
We write SDE for stochastic differential equation, BSDE for backward SDE, and FBSDE for forward and backward SDE. We also follow the customary convention that $\omega$ is suppressed in SDEs and integrals, except in  places where an explicit $\omega$ is needed.

Consider a market consisting of a bank account with price $ \lbrace S_0(t) \rbrace $ given by
\begin{equation}\label{bank_account}
dS_0(t) = r(t)S_0(t)dt, \ 0\leq t\leq T, \ S_0(0)=1,
\end{equation}
and $ N $ stocks with prices $ \lbrace S_n(t) \rbrace, \ n=1,\cdots,N $, given by 
\begin{equation}\label{stock_price}
dS_n(t) = S_n(t)\left[ b_n(t)dt+\sum_{m=1}^N\sigma_{nm}(t)dW_{m}(t) \right],\ 0\leq t\leq T, \ S_n(0)>0.
\end{equation}
We assume that 
$r \in {\cal P}(0,T; \mathbb{R})$ (scalar interest rate), $b\in {\cal P}(0,T; \mathbb{R}^N)$ (vector of appreciation rates), and $\sigma \in {\cal P}(0,T; \mathbb{R}^{N\times N})$ (volatility matrix) 
 are uniformly bounded. We also assume that there exists a positive constant $ k$ such that 
\begin{equation*}
z'\sigma(t)\sigma'(t)z\geq k|z|^2
\end{equation*}
for all $ (z,\omega,t)\in \mathbb{R}^N\times \Omega\times [0,T] $, where $z'$ is the transpose of $z$. The strong non-degeneracy condition above ensures that matrices $\sigma(t), \sigma' (t)$ are  invertible and uniformly bounded.  

Consider a small investor with initial wealth $ x_0>0 $ and a self-financing strategy.  Define the set of admissible portfolio strategies by
\begin{equation*}
\left.
\begin{array}{c}
\mathcal{A}:= \left\{ \pi \in  {\cal H}^2(0,T; \mathbb{R}^N): \pi(t)\in K  \textit{ for $t\in[0,T]$ a.e.}\right\},
\end{array}
\right.
\end{equation*}
where $ K \subseteq \mathbb{R}^N $ is a closed convex set containing $0$ and  $ \pi$ is a  portfolio process with   each entry $ \pi_n(t) $ defined as the amount  invested in the stock $n$ for $n=1,\ldots,N$. 
Given any  $ \pi\in\mathcal{A} $, the investor's total wealth $ X^{\pi}$ satisfies the SDE
\begin{equation}\label{wealth_process}
\left\{
\begin{array}{l}
dX^{\pi}(t)=[ r(t)X^{\pi}(t)+\pi'(t)\sigma(t)\theta(t) ] dt + \pi'(t)\sigma(t)dW(t),\quad 0\leq t\leq T,
\\
X^{\pi}(0)=x_0,
\end{array}
\right.
\end{equation}
where $\theta(t) := \sigma^{-1}(t)\left[ b(t)-r(t)\textbf{1} \right]$ is
the market price of risk at time $t$ and is uniformly bounded and  $ \textbf{1}\in\mathbb{R}^N $ has all unit entries.   A pair $(X,\pi)$ is  {\it admissible} if $\pi\in {\cal A}$ and $X$ is a strong solution to the SDE (\ref{wealth_process}) with control process $\pi$.

Define a functional $ J:\mathcal{A}\rightarrow\mathbb{R} $ by
\begin{equation*}
J(\pi):= E\left[ \int_0^T f(t,X^{\pi}(t),\pi(t))dt + g(X^{\pi}(T)) \right],
\end{equation*}
where $ f: \Omega\times [0,T]\times \mathbb{R}\times \mathbb{R}^N\rightarrow \mathbb{R} $ and $ g:\Omega\times \mathbb{R}\rightarrow\mathbb{R} $ are defined by 
\begin{equation}\label{cost_functions_f_g}
\left\{
\begin{array}{l}
f(\omega, t,x,\pi):= \dfrac{1}{2}\left[Q( t)x^2+2S'( t)x\pi + \pi' R( t)\pi\right],\vspace{2mm}\\\
g(\omega, x):= \dfrac{1}{2}\left[ a x^2 + 2c x \right].
\end{array}
\right.
\end{equation}
We assume that  random variables 
$a,c\in L^{\infty}_{\mathcal{F}_T}(\mathbb{R})$ satisfy
$$ 0< \inf_{\omega\in\Omega} a(\omega)\leq \sup_{\omega\in\Omega}a(\omega)<\infty$$
and processes  
$Q\in {\cal P}(0,T; \mathbb{R}),
S\in {\cal P}(0,T; \mathbb{R}^N), 
R\in {\cal P}(0,T; \mathbb{R}^{N\times N})$ are uniformly bounded, $R(t)$ is a symmetric matrix, and the matrix 
$$\left(\begin{array}{cc}
Q(t) & S'(t)\\ S(t) & R(t)
\end{array}\right)
$$
is non-negative definite for all $(\omega,t)\in  \Omega\times [0,T]$. 
Under these assumptions    we know $ J $ is a convex functional of $\pi$. 

We consider the following optimization problem:
\begin{equation}\label{primal_ori}
\mbox{Minimize $J(\pi)$ 
subject to $( X,\pi)$   admissible.}
\end{equation}
An admissible control $ \hat{\pi} $ is  {\it optimal} if 
$J(\hat{\pi})\leq J(\pi)$ for all $\pi\in\mathcal{A}$.

%+++++++++++++++++++++++++++++++++++++++++
%=================================================
%\section{Dual Problem Formulation}
Following the approach introduced in \cite{heunislabbe:constrainedMV}, we now set up the dual problem. Denote by
\begin{flalign*}
\mathbb{B}:= &\mathbb{R}\times {\cal H}^2(0,T; \mathbb{R})\times {\cal H}^2(0,T; \mathbb{R}^N).
\end{flalign*}
We write $ X\in\mathbb{B} $ if and only if 
\begin{equation*}
X(t)=x_0+\int_0^t \dot{X}(\tau)d\tau + \int_0^t \Lambda_X'(\tau) dW(\tau), \; 0\leq t\leq T,
\end{equation*}
for some $ (x_0,\dot{X},\Lambda_X)\in\mathbb{B} $.
We now reformulate \eqref{primal_ori} as a primal optimization problem over the whole set $ \mathbb{B} $. For each $ X\equiv (x_0,\dot{X},\Lambda_X)\in\mathbb{B} $, define
\begin{flalign*}
\mathcal{U}(X):= \lbrace \pi\in\mathcal{A} \textit{ such that } &\dot{X}(t)=r(t)X(t)+\pi'(t)\sigma(t)\theta(t)\\
\textit{ and } &\Lambda_X(t)=\sigma'(t)\pi(t) \textit{ for }\forall t\in[0,T], \ \mathbb{P}-a.e. \rbrace.
\end{flalign*}
The set $\mathcal{U}(X)$ contains all  admissible controls $\pi\in {\cal A}$ that make $X$ an admissible wealth process. Note that $\mathcal{U}(X)\neq\varnothing$ if and only if $(\dot{X}(t), \Lambda_X(t)) \in {\cal S}(t, X(t))$  for $\left(\mathbb{P}\otimes Leb\right)$-a.e. $ (\omega,t)\in \Omega\times [0,T]$, where ${\cal S}$ is a set valued function defined by
$$ {\cal S}(\omega, t,x):=  \{(v,\xi): v=r( t)x+\xi'\theta( t) \textit{ and } \left[\sigma'\right]^{-1}( t)\xi\in K\}.$$
 Define the penalty function 
 $ L:\Omega\times [0,T]\times\mathbb{R}\times\mathbb{R}\times\mathbb{R}^N \rightarrow [0,\infty] $ by
 $$L(\omega, t,x,v,\xi) =
 f\left(\omega, t,x,[\sigma']^{-1}( t)\xi\right)
+\Psi_{{\cal S}(\omega, t,x)}( v,\xi)
$$
and 
the penalty function $ l_0:\mathbb{R}\rightarrow [0,\infty] $ by
\begin{equation*}
l_0(x) =\Psi_{\{x_0\}}(x),
\end{equation*}
where $\Psi_{U}( u)$ is a penalty function which equals 0 if $u$ is in set $U$ and $+\infty$ otherwise. 

For $ X  \in \mathbb{B}$, define the cost functional as
\begin{equation*}
\Phi(X) := l_0(x_0)+E\left[ g(X(T)) \right] + E\left[ \int_0^T L(t,X(t),\dot{X}(t),\Lambda_X(t))dt \right].
\end{equation*}
Note that $\Phi(X) =\infty$ if $X(0)\neq x_0$ or $\mathcal{U}(X)= \varnothing$. 
Problem \eqref{primal_ori} is equivalent to
\begin{equation*}
\mbox{Minimize $\Phi(X)$ subject to $X\in\mathbb{B}$. }
\end{equation*}
We now establish the dual problem over the set $ \mathbb{B} $. Define the following convex conjugate functions
\begin{flalign*}
m_0(y)&:=\sup_{x\in\mathbb{R}}\{ xy-l_0(x) \},\\
m_T(\omega, y)&:=\sup_{x\in\mathbb{R}}\lbrace -xy-g(\omega, x)\rbrace,\\
M(\omega, t,y,s,\gamma)&:=\sup_{x,v\in\mathbb{R},\xi\in\mathbb{R}^N}\lbrace xs+vy+\xi'\gamma-L(\omega, t,x,v,\xi) \rbrace,
\end{flalign*}
for all $  (\omega, t,y,s,\gamma) \in \Omega\times [0,T] \times\mathbb{R}\times\mathbb{R}\times\mathbb{R}^N $.
For each $ Y\equiv (y,\dot Y,\Lambda_Y)\in\mathbb{B} $, define
\begin{equation*}
\Psi(Y):= m_0(y)+E\left[ m_T(Y(T)) \right] + E\left[ \int_0^T M(t, Y(t), \dot Y(t), \Lambda_Y(t))dt \right].
\end{equation*}
Then the dual problem is given by
\begin{equation*}
\mbox{Minimize $\Psi(Y)$ subject to $Y\in\mathbb{B}$.}
\end{equation*}
We can write the dual problem equivalently as a stochastic control problem.
Some simple calculus gives 
\begin{eqnarray}
m_0(y)&=&x_0y,\nonumber\\
m_T(\omega, y)&=&\dfrac{(y+c)^2}{2a},\nonumber \\
M(\omega, t,y,s,\gamma) &=& \phi(t,s+r( t)y,\sigma( t)\left[\theta(t)y +\gamma\right]), \label{standard_M}
\end{eqnarray}
where $\phi$ is the conjugate function of $\tilde f(\omega, t,x,\pi)=f(\omega, t,x,\pi)+\Psi_K(\pi)$, namely,
$$ \phi(\omega, t,\alpha,\beta):=\sup_{x\in \mathbb{R}, \pi\in K}\{ x\alpha + \pi' \beta - f(\omega, t,x,\pi)\}.$$
The dual control problem is therefore given by
\begin{equation}\label{dual_cost_function}
\mbox{Minimize }  \tilde{\Psi}(y,\alpha,\beta):=  m_0(y)+E\left[ m_T(Y(T)) \right] + E\left[ \int_0^T \phi(t,\alpha(t),\beta(t))dt \right],
\end{equation}
where $ Y$ satisfies 
\begin{equation}\label{dual_fsde}
\left\{
\begin{array}{l}
dY(t)=\left[\alpha(t)-r(t)Y(t)\right]dt+\left[ \sigma^{-1}(t)\beta(t)-\theta(t)Y(t) \right]' dW(t)\vspace{2mm}\\
Y(0)=y.
\end{array}
\right.
\end{equation}
Here we have used the relation (\ref{standard_M}) to get $\alpha(t)=\dot Y(t)+r(t)Y(t)$ and $\beta(t)=\sigma(t)(\theta(t)Y(t)+\Lambda_Y(t))$, which shows $\dot Y(t)=\alpha(t)-r(t)Y(t)$ and $\Lambda_Y(t) = \sigma^{-1}(t)\beta(t) -\theta(t)Y(t)$,  for the dual process $Y$.
The dual control process for $Y$ is $(y,\alpha,\beta)\in\mathbb{B} $. From \cite[Corollary 2.5.10]{kylovcontrolled}, we have 
$Y^{(y,\alpha,\beta)}\in {\cal S}^2(0,T; \mathbb{R})$.
Note that the control constraint is implicit for the dual problem.  For example, if $Q=0,S=0,R=0$, then $\alpha$ must be zero and may be simply dropped in  (\ref{dual_cost_function}) and (\ref{dual_fsde}).

\begin{remark} (Alternative way of deriving the dual problem) We have followed \cite{heunislabbe:constrainedMV} to derive the dual problem (\ref{dual_cost_function}) and (\ref{dual_fsde}) by first converting the primal dynamic problem into a static problem, then applying convex analysis to get the static dual problem, and finally recovering the dual dynamic problem. One may derive the dual problem (\ref{dual_cost_function}) and (\ref{dual_fsde}) directly using a standard method in utility maximization in mathematical finance. Specifically, we may assume the dual process $Y$ is driven by a SDE
$$ dY(t)=\alpha_1(t)dt+ \beta_1(t)dW(t)$$
with initial condition $Y(0)=y$, where $\alpha_1$ and $\beta_1$ are two stochastic processes to be determined. Ito's lemma gives
$$ d(X^\pi(t)Y(t))= 
(X^\pi(t)\alpha(t)+\pi'(t)\beta(t))dt + \mbox{local martingale},
$$
where $\alpha(t)=\alpha_1(t)+r(t)Y(t)$ and $\beta(t)=\sigma(t)(\beta_1(t)+\theta(t)Y(t))$. Since $\alpha_1(t)=\alpha(t)-r(t)Y(t)$ and $\beta_1(t)=\sigma^{-1}(t)\beta(t)-\theta(t)Y(t)$, we have $Y$ satisfies SDE (\ref{dual_fsde}). The process
$X^\pi(t)Y(t)-\int_0^t (X^\pi(s)\alpha(s)+\pi'(s)\beta(s))ds$ is a local martingale and a super-martingale if we assume further that it is bounded below by an integrable process, in particular, we have the relation
\begin{equation} \label{eqn2.9}
E\left[X^\pi(T)Y(T)-\int_0^T (X^\pi(s)\alpha(s)+\pi'(s)\beta(s))ds\right]\leq X^\pi(0)Y(0)=x_0y.
\end{equation}
The constrained minimization problem (\ref{primal_ori}) can be written equivalently  as
$$ \max_\pi E\left[\int_0^T (-f(t,X^\pi(t),\pi(t)) -\Psi_K(\pi(t))) dt  - g(X^\pi(T))\right].$$ 
 The dual functions of $-f(t,\cdot,\cdot)-\Psi_K(\cdot)$ and $-g(\cdot)$ are given by
$$ \phi(t,\alpha,\beta)=\sup_{x,\pi}\{ -f(t,x,\pi)-\Psi_K(\pi)+x\alpha+\pi'\beta\}
\mbox{ and } m_T(y)=\sup_{x}(-g(x)-xy).$$
Combining the dual relations above and (\ref{eqn2.9}), we have
\begin{eqnarray*}
&& \max_\pi E\left[\int_0^T (-f(t,X^\pi(t),\pi(t)) -\Psi_K(\pi(t))) dt  - g(X^\pi(T))\right]\\
&\leq& \min_{y,\alpha,\beta} \left\{x_0y + E\left[\int_0^T \phi(t,\alpha(t),\beta(t))dt + m_T(Y(T))\right]\right\},
\end{eqnarray*}
which gives the dual problem (\ref{dual_cost_function}).
\end{remark}

\begin{remark} (Existence of optimal solutions)\label{rk-solu}
Following a similar argument as in \cite[Proposition 5.4]{heunislabbe:constrainedMV}, we can show that $ \tilde{\Psi} $ defined in  \eqref{dual_cost_function} is convex on $ \mathbb{B} $
due to convexity of $m_T$ and $\phi$ and  linearity of state equation \eqref{dual_fsde}. Since $ \tilde{\Psi}(y,\alpha,\beta)>x_0y>-\infty$ for all $(y,\alpha,\beta)\in\mathbb{B} $ and $ \tilde{\Psi}(0,0,0) = E\left[\frac{c^2}{2a}\right]<\infty $, we  have $ \tilde{\Psi} $ is proper. Furthermore, by the non-negativity and semi-continuity of $ \phi $ and Fatou's lemma, we conclude that $ \tilde{\Psi} $ is lower semi-continuous on $ \mathbb{B} $. Finally, using It\^{o}\rq{}s isometry, we can show that $ \tilde{\Psi} $ is coercive (i.e., $ \tilde{\Psi}(y,\alpha,\beta) \rightarrow \infty \textit{ as }\| (y,\alpha,\beta)\| \rightarrow \infty$). Hence, the existence of a solution to the dual problem is guaranteed from \cite[Chapter 2, Proposition 1.2]{ekelandtemam:convexanalysis}, that is, there exists some $(\hat{y},\hat{\alpha},\hat{\beta})\in\mathbb{B}$ such that 
\begin{equation*}
\inf_{(y,\alpha,\beta)\in\mathbb{B}} \tilde{\Psi}(y,\alpha,\beta) = \tilde{\Psi}(\hat{y},\hat{\alpha},\hat{\beta}) \in \mathbb{R}.
\end{equation*}
Given the dual optimal solution $(\hat{y},\hat{\alpha},\hat{\beta})$,  we may apply Theorem \ref{dual_to_primal} to construct an optimal control  $\hat \pi$ for problem (\ref{primal_ori}), which proves the existence of a solution to the primal problem. It is more difficult, but accomplishable, to prove directly the existence of a solution to the primal problem as one needs to show the closedness of the set of all replicable terminal wealth under  pointwise control constraints, see \cite{Czichowsky:convexdualconstraints} (also \cite{cwz}) for detailed discussions. 
\end{remark}

%++++++++++++++++++++++++++++++++++++++++++++++++++
%=========================================================
%\section{Necessary and Sufficient Conditions for the Primal Problem}
\section{Main Results} \label{mainresults}
In this section, we derive the necessary and sufficient optimality conditions for  primal and dual problems and  show the connection between the optimal solutions  through their corresponding FBSDEs. To highlight the main results and streamline the discussion, we leave the proofs of all the theorems  in Section~5.

Given any admissible control $\pi\in {\cal A}$ and solution $X^\pi$  to the SDE (\ref{wealth_process}), 
 the associated adjoint equation in  unknown  processes $ p_1\in {\cal S}^2(0,T; \mathbb{R})$ and $q_1\in {\cal H}^2(0,t;\mathbb{R}^N)$  is the following linear BSDE 
\begin{equation}\label{adjoint_BSDE}
\left\{
\begin{array}{l}
dp_1(t)=\left[-r(t)p_1(t)+Q(t)X(t)+S'(t)\pi(t) \right]dt + q_1'(t)dW(t)\vspace{2mm}\\
p_1(T)=-aX^{\pi}(T)-c.
\end{array}
\right.
\end{equation}
From \cite[Theorem 6.2.1]{pham:contimuoustimeSC} we know that 
there exists a unique solution $ (p_1,q_1) $ to the  BSDE \eqref{adjoint_BSDE}.  
We now state the necessary and sufficient conditions for the primal problem.
\begin{theorem}(Primal problem and associated FBSDE)\label{primal_necesuff}
Let $ \hat{\pi}\in\mathcal{A} $. Then $ \hat{\pi} $ is optimal for the primal problem if and only if the solution $ (X^{\hat{\pi}},\hat{p}_1,\hat{q}_1) $ of FBSDE
\begin{equation}\label{primal_necesuff_FBSDE}
\left\{
\begin{array}{l}
dX^{\hat{\pi}}(t)=\left[ r(t)X^{\hat{\pi}}(t)+\hat{\pi}'(t)\sigma(t)\theta(t) \right] dt +\hat{\pi}'(t)\sigma(t)dW(t)\\
X^{\hat{\pi}}(0)=x_0\\
d\hat{p}_1(t) = \left[-r(t)\hat{p}_1(t)+Q(t)X^{\hat{\pi}}(t)+S'(t)\hat{\pi}(t)\right]dt + \hat{q}_1'(t)dW(t)\\
\hat{p}_1(T) = -aX^{\hat{\pi}}(T)-c
\end{array}
\right.
\end{equation} 
satisfies the condition
\begin{equation}\label{primal_condition}
\left[ \hat{\pi}'-\pi' \right]\left[ \hat{p}_1(t)\sigma(t)\theta(t)+\sigma(t)\hat{q}_1(t)+S(t)X^{\hat{\pi}}(t)+R(t)\hat{\pi}(t) \right]\geq 0
\end{equation}
for $\left(\mathbb{P}\otimes Leb\right)$-a.e. $ (\omega,t)\in \Omega\times [0,T] $ and $\pi\in K$.
\end{theorem}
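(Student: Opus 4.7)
The plan is to use the first-order variational characterization of a minimizer of a convex functional on a convex set. Since $J$ is convex in $\pi$ and $\mathcal{A}$ is convex, $\hat{\pi}$ is optimal if and only if the one-sided G\^ateaux derivative satisfies $\frac{d}{d\epsilon}J(\hat{\pi}+\epsilon(\pi-\hat{\pi}))|_{\epsilon=0^{+}}\geq 0$ for every $\pi\in\mathcal{A}$. The entire proof then reduces to computing this derivative explicitly in terms of $(\hat{p}_1,\hat{q}_1)$, recognising the bracket of \eqref{primal_condition}, and converting the expectation inequality into its pointwise form.

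To compute the derivative, I would set $\pi^{\epsilon}=\hat{\pi}+\epsilon(\pi-\hat{\pi})\in\mathcal{A}$ (admissibility by convexity of $\mathcal{A}$) and introduce the variational process $Z(t):=\partial_{\epsilon}X^{\pi^{\epsilon}}(t)|_{\epsilon=0}$, which solves the linear SDE
\begin{equation*}
dZ(t)=\bigl(r(t)Z(t)+(\pi(t)-\hat{\pi}(t))'\sigma(t)\theta(t)\bigr)dt+(\pi(t)-\hat{\pi}(t))'\sigma(t)\,dW(t),\quad Z(0)=0.
\end{equation*}
Uniform boundedness of the coefficients yields a standard $\mathcal{S}^{2}$-estimate on $Z$, which together with the quadratic growth of $f$ and $g$ justifies differentiating under the expectation and gives
\begin{equation*}
\tfrac{d}{d\epsilon}J(\pi^{\epsilon})|_{\epsilon=0}=E\Bigl[\int_{0}^{T}\!\bigl\{(QX^{\hat{\pi}}+S'\hat{\pi})Z+(SX^{\hat{\pi}}+R\hat{\pi})'(\pi-\hat{\pi})\bigr\}dt+(aX^{\hat{\pi}}(T)+c)Z(T)\Bigr].
\end{equation*}
The crux is to eliminate $Z$ via the adjoint: I would apply It\^o's product rule to $\hat{p}_1 Z$ and take expectations. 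The drift $-r\hat{p}_1+QX^{\hat{\pi}}+S'\hat{\pi}$ of $\hat{p}_1$ is engineered so that the $r\hat{p}_1 Z$ contributions cancel; the quadratic cross-variation produces $(\pi-\hat{\pi})'\sigma\hat{q}_1$; the terminal condition $\hat{p}_1(T)=-aX^{\hat{\pi}}(T)-c$ cancels the boundary term $(aX^{\hat{\pi}}(T)+c)Z(T)$; and the stochastic-integral terms vanish in expectation by the $\mathcal{S}^{2}\times\mathcal{H}^{2}$-integrability of $(\hat{p}_1,\hat{q}_1)$. After substitution, the derivative collapses to a single integral whose integrand is $(\hat{\pi}-\pi)'$ times the bracketed expression $\Gamma(t)$ appearing in \eqref{primal_condition}.

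Finally, I would pass from the expectation-level inequality $E\!\left[\int_{0}^{T}(\hat{\pi}-\pi)'\Gamma\,dt\right]\geq 0$ to the pointwise condition \eqref{primal_condition}. Sufficiency is immediate from the convexity of $J$: if \eqref{primal_condition} holds pointwise, the G\^ateaux derivative is non-negative in every admissible direction, hence $J(\hat{\pi})\leq J(\pi)$ for all $\pi\in\mathcal{A}$. For necessity, suppose $\hat{\pi}$ is optimal but \eqref{primal_condition} fails on a measurable set $A\subseteq\Omega\times[0,T]$ of positive $(\mathbb{P}\otimes Leb)$-measure; a standard measurable-selection theorem applied to the closed convex-valued map $t\mapsto K$, together with progressive measurability of $\Gamma$ and $\hat{\pi}$, allows one to construct $\pi\in\mathcal{A}$ making the integrand strictly negative on $A$ and zero off $A$, yielding a strictly negative G\^ateaux derivative and contradicting optimality. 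I expect this measurable-selection step to be the only genuinely technical part of the argument; the remainder is routine It\^o calculus combined with the $\mathcal{S}^{2}/\mathcal{H}^{2}$-estimates already invoked for the existence of $(\hat{p}_1,\hat{q}_1)$.
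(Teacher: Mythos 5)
Your proposal follows essentially the same route as the paper's proof: the convex first-order condition $\langle J'(\hat{\pi}),\pi-\hat{\pi}\rangle\geq 0$, elimination of the state variation by applying It\^o's formula to the product of $\hat{p}_1$ with the (linear) variational process and verifying the true-martingale property of the stochastic integrals via the Burkholder--Davis--Gundy inequality and the $\mathcal{S}^2/\mathcal{H}^2$ estimates, followed by a perturbation/measurable-selection argument to pass from the integrated inequality to the pointwise one. The only cosmetic differences are that the paper computes $E[X^{\hat{\pi}}(T)\hat{p}_1(T)]$ and $E[X^{\pi}(T)\hat{p}_1(T)]$ separately and subtracts, whereas you work directly with $Z=X^{\pi}-X^{\hat{\pi}}$, and that the localization step is handled in the paper by fixing a constant $\pi\in K$, showing the corresponding exceptional set is null, and then invoking separability of $K$ rather than selecting a measurable $\pi$ on the bad set.
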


\begin{remark} \label{rk5}
If one knows the optimal control $\hat\pi$, it is easy to find the optimal wealth process $X^{\hat\pi}$ and the adjoint process $(\hat p_1, \hat q_1)$ as  (\ref{primal_necesuff_FBSDE}) is a decoupled linear FBSDE given $\hat\pi$. It is much more difficult, but most interesting, to find the optimal control $\hat \pi$ using (\ref{primal_necesuff_FBSDE}) and (\ref{primal_condition}), which is related to the solvability of a fully coupled  constrained linear FBSDE. From Remark
\ref{rk-solu}, we know there exists a solution $(X^{\hat\pi}, \hat p_1, \hat q_1)$ to the constrained FBSDE 
(\ref{primal_necesuff_FBSDE}) and (\ref{primal_condition}). It is a  challenge on how one may actually find the solution.
If  $K=\mathbb{R}^N$, then condition (\ref{primal_condition}) becomes 
$$ \hat{p}_1(t)\sigma(t)\theta(t)+\sigma(t)\hat{q}_1(t)+S(t)X^{\hat{\pi}}(t)+R(t)\hat{\pi}(t) =0.$$
If we further assume $R(t)$ is positive definite and $R(t)^{-1}$ is uniformly bounded, then we can substitute the optimal control $\hat \pi(t)$ into the FBSDE (\ref{primal_necesuff_FBSDE}) to get a fully-coupled linear FBSDE with random coefficients, see \cite{yong2006} for discussions on the solvability of linear FBSDEs.
\end{remark}

%+++++++++++++++++++++++++++++++++++++++++++++++
%===============================================
%\section{Necessary and Sufficient Conditions for the Dual Problem}
%In this section, we derive the necessary and sufficient conditions for the dual problem. 
Given any admissible control $ (y,\alpha,\beta)\in\mathbb{B} $ and solution $ Y^{(y,\alpha,\beta)} $ to the SDE \eqref{dual_fsde},  the associated adjoint equation in  unknown  processes $ p_2\in {\cal S}^2(0,T; \mathbb{R})$ and $q_2\in {\cal H}^2(0,t;\mathbb{R}^N)$  is the following linear BSDE 
\begin{equation}\label{dual_adjoint_BSDE}
\left\{
\begin{array}{l}
dp_2(t)=\left[ r(t)p_2(t)+q_2'(t)\theta(t) \right]dt + q_2'(t)dW(t)\\
p_2(T) = -\dfrac{Y^{(y,\alpha,\beta)}(T)+c}{a}.
\end{array}
\right.
\end{equation}
From \cite[Theorem 6.2.1]{pham:contimuoustimeSC}, we know that there exists a unique solution $ (p_2,q_2) $ to the BSDE
 \eqref{dual_adjoint_BSDE}. To derive the necessary condition, 
we need to impose the following assumption on $ \phi $ at the optimal dual control process
$(\hat\alpha, \hat\beta)$.

  \begin{assumption}\label{assum_monoton} 
  Let $(\hat\alpha, \hat\beta)$ be given and $ \alpha,\beta $ be any admissible control. Then there exists a   $ Z\in {\cal P}(0,T; \mathbb{R})$ satisfying $ E[\int_0^T\vert Z(t)\vert dt] <\infty$ and 
 \begin{equation}\label{assumptionZ}
 Z(t)\geq\dfrac{\phi(t,\hat{\alpha}(t)+\varepsilon\alpha(t),\hat{\beta}(t)+\varepsilon\beta(t))-\phi(t,\hat{\alpha}(t),\hat{\beta}(t))}{\varepsilon}
 \end{equation} 
 for $ \left(\mathbb{P}\otimes Leb\right) $-a.e. $ (\omega,t)\in \Omega\times [0,T] $ and $  \varepsilon\in(0,1] $.
 \end{assumption}

\begin{remark} \label{remark4} Here are a few comments on Assumption \ref{assum_monoton}.
\begin{enumerate}
\item Condition (\ref{assumptionZ}) is a technical condition that ensures one can apply the monotone convergence theorem and pass the limit under the expectation and integral as $\varepsilon\downarrow 0$, which is used in proving  the second and third relations in (\ref{dual_conditions}), see the proof of Theorem  \ref{Dual_Nece_Cond} in Section \ref{section5}. 
   A similar assumption is used in \cite[Assumption 1.2]{cadenillaskaratzas:convexsmp} on the data of the primal problem.   
\item If $K=\mathbb{R}^N$, $S(t)=0$ and  $Q(t),R(t)$ are positive definite  and their inverses are uniformly bounded, then $\phi(t, \alpha, \beta)={1\over 2}Q(t)^{-1}\alpha^2 + {1\over 2}\beta' R(t)^{-1}\beta$.  Condition (\ref{assumptionZ})   holds if $Z$ is chosen to be
$$ Z(t):=Q(t)^{-1}\hat\alpha(t)\alpha(t) + \hat\beta'(t)R(t)^{-1}\beta(t)
+{1\over 2} Q(t)^{-1}\alpha(t)^2 + {1\over 2}\beta'(t)R(t)^{-1}\beta(t).$$

\item If $Q(t)=0,S(t)=0,R(t)=0$, then $\alpha(t)=0$ for the dual problem. We may drop $\alpha$ in the expression of $\phi$ which becomes a support function of $K$, i.e., $\phi$ is given by $\phi(t,\beta)=\delta(\beta):=\sup_{\pi\in K} \pi'\beta$. If we further assume that $K$ is a bounded set, then condition (\ref{assumptionZ})   holds if  $Z$ is chosen to be $Z(t)=\delta(\beta(t))$. However, if $K$ is unbounded, then Assumption \ref{assum_monoton} may not hold and we cannot use the monotone convergence theorem to prove (\ref{dual_conditions}). Other methods may have to be used, see Remark \ref{remark10} for further discussions. 
\end{enumerate}
\end{remark}

We now  state the necessary and sufficient conditions for the dual  problem.

\begin{theorem}(Dual problem and associated FBSDE)\label{Dual_Nece_Cond}
Let $ (\hat{y},\hat{\alpha},\hat{\beta}) \in\mathbb{B}$ satisfy Assumption \ref{assum_monoton}. Then $ (\hat{y},\hat{\alpha},\hat{\beta})$ is  optimal for the dual problem if and only if  the solution $ (Y^{(\hat{y},\hat{\alpha},\hat{\beta})},\hat{p}_2,\hat{q}_2) $ of FBSDE
\begin{equation}\label{dual_FBSDE}
\left\{
\begin{array}{l}
dY^{(\hat{y},\hat{\alpha},\hat{\beta})}(t)=\left[\hat{\alpha}(t)-r(t)Y^{(\hat{y},\hat{\alpha},\hat{\beta})}(t)\right]dt + \left[ \sigma^{-1}(t)\hat{\beta}(t)-\theta(t)Y^{(\hat{y},\hat{\alpha},\hat{\beta})}(t) \right]' dW(t)\vspace{2mm}\\
Y^{(\hat{y},\hat{\alpha},\hat{\beta})}(0)=\hat{y}\vspace{2mm}\\
d\hat{p}_2(t)=\left[ r(t)\hat{p}_2(t)+\hat{q}_2'(t)\theta(t) \right] dt + \hat{q}_2'(t)dW(t)\vspace{2mm}\\
\hat{p}_2(T) = -\dfrac{Y^{(\hat{y},\hat{\alpha},\hat{\beta})}(T)+c}{a}
\end{array}
\right.
\end{equation}
 satisfies the conditions
\begin{equation}\label{dual_conditions}
\left\{
\begin{array}{l}
\hat{p}_2(0)=x_0,\vspace{2mm}\\
\left[\sigma'\right]^{-1}(t)\hat{q}_2(t)\in K,\vspace{2mm}\\
\left( \hat{p}_2(t),\left[\sigma'\right]^{-1}(t)\hat{q}_2(t) \right)\in \partial\phi(\hat{\alpha}(t),\hat{\beta}(t)),
\end{array}
\right.
\end{equation}
for $ \left(\mathbb{P}\otimes Leb\right) $-a.e. $ (\omega,t)\in \Omega\times [0,T] $.
\end{theorem}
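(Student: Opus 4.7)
The plan is to treat the dual problem as convex minimization of $\tilde\Psi$ on $\mathbb{B}$, so that optimality is a subgradient condition, and to translate this subgradient condition into the statement via Itô's formula applied to the adjoint processes. The key identity, which underlies both directions, is that for any $(y,\alpha,\beta)\in\mathbb{B}$, the process $\Delta Y$ solving \eqref{dual_fsde} with initial value $y$ and drivers $(\alpha,\beta)$ satisfies
\begin{equation*}
E[\hat p_2(T)\Delta Y(T)] \;=\; \hat p_2(0)\,y \;+\; E\!\left[\int_0^T \!\!\Big(\hat p_2(t)\alpha(t) + \bigl([\sigma']^{-1}(t)\hat q_2(t)\bigr)'\beta(t)\Big)\,dt\right],
\end{equation*}
as the $r$ and $\theta$ drift contributions cancel against the martingale part of $\hat p_2$. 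The terminal condition in \eqref{dual_FBSDE} reads $\hat p_2(T)=-(Y^{(\hat y,\hat\alpha,\hat\beta)}(T)+c)/a=-m_T'(Y^{(\hat y,\hat\alpha,\hat\beta)}(T))$, so $\hat p_2$ is precisely the adjoint that encodes the derivative of the terminal cost.

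For \emph{sufficiency}, assume the three conditions in \eqref{dual_conditions}. For arbitrary admissible $(y,\alpha,\beta)$, use convexity of the quadratic $m_T$ to get $m_T(Y(T))-m_T(\hat Y(T)) \geq -\hat p_2(T)\,(Y(T)-\hat Y(T))$ and the subgradient inequality $\phi(\alpha,\beta)-\phi(\hat\alpha,\hat\beta)\geq \hat p_2(\alpha-\hat\alpha)+([\sigma']^{-1}\hat q_2)'(\beta-\hat\beta)$ from the third condition. Applying the Itô identity above to $\Delta Y=Y-\hat Y$ (with initial value $y-\hat y$ and drivers $(\alpha-\hat\alpha,\beta-\hat\beta)$) and using $\hat p_2(0)=x_0$ to cancel $x_0(y-\hat y)$, all cross terms telescope and give $\tilde\Psi(y,\alpha,\beta)\geq\tilde\Psi(\hat y,\hat\alpha,\hat\beta)$.

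For \emph{necessity}, assume optimality and perturb by $(\hat y+\varepsilon y,\hat\alpha+\varepsilon\alpha,\hat\beta+\varepsilon\beta)$ with $\varepsilon\downarrow 0$. By linearity of the state equation, the $m_T$ contribution is a Taylor expansion that converges directly, while for the $\phi$ contribution, convexity makes the difference quotient nondecreasing in $\varepsilon$, and Assumption~\ref{assum_monoton} supplies the integrable dominating function $Z$ that licenses monotone/dominated convergence. Passing to the limit and applying the Itô identity yields
\begin{equation*}
(x_0-\hat p_2(0))\,y + E\!\left[\int_0^T\!\!\Big(\phi'\bigl((\hat\alpha,\hat\beta);(\alpha,\beta)\bigr) - \hat p_2\alpha - ([\sigma']^{-1}\hat q_2)'\beta\Big)dt\right]\geq 0
\end{equation*}
for every $(y,\alpha,\beta)\in\mathbb{B}$. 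Choosing $\alpha=\beta=0$ and varying the sign of $y$ forces $\hat p_2(0)=x_0$. Taking $y=0$ and replacing $(\alpha,\beta)$ by $(\alpha'-\hat\alpha,\beta'-\hat\beta)$, then bounding the directional derivative of $\phi$ from above by the finite difference $\phi(\alpha',\beta')-\phi(\hat\alpha,\hat\beta)$, yields an integrated subgradient inequality. A standard localization — restricting the perturbation to $\mathbf{1}_A(\alpha_0-\hat\alpha,\beta_0-\hat\beta)$ for arbitrary progressively measurable $A\subset\Omega\times[0,T]$ and arbitrary constants $(\alpha_0,\beta_0)$, then exhausting the choices through a countable dense set and invoking lower semicontinuity of $\phi$ — delivers $(\hat p_2(t),[\sigma']^{-1}(t)\hat q_2(t))\in\partial\phi(\hat\alpha(t),\hat\beta(t))$ a.e. Finally, $[\sigma']^{-1}\hat q_2\in K$ follows automatically via Fenchel-Young, since $\phi$ is the conjugate of $f+\Psi_K$ and any subgradient must lie in the effective domain $\mathbb{R}\times K$.

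The main obstacle will be the necessity direction: specifically, the limit interchange for the $\phi$-term, which is exactly what Assumption~\ref{assum_monoton} was crafted to handle, and the subsequent localization from an integrated inequality to an a.e. pointwise subgradient condition. The latter is delicate because $\phi$ may fail to be continuous or even finite outside a constrained set, so one must exhaust the inequality via a countable collection of bounded test points and appeal to the lower semicontinuity of $\phi$ to conclude the subgradient characterization at almost every $(\omega,t)$.
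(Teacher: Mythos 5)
Your proposal is correct and follows essentially the same route as the paper's proof: the same Itô identity for $E[\hat p_2(T)Y^{(y,\alpha,\beta)}(T)]$, the same perturbation argument with Assumption~\ref{assum_monoton} and monotone convergence for necessity (including the localization from the integrated inequality to the a.e.\ pointwise one via measurable sets and a countable dense family), the same convexity/subgradient telescoping for sufficiency, and the same Fenchel-equality argument forcing $[\sigma']^{-1}\hat q_2\in K$. The only cosmetic differences are that the paper phrases sufficiency through a concave Hamiltonian with $(0,0)\in\partial H(\hat\alpha,\hat\beta)$ and retains the directional derivative $\phi^o$ rather than the finite-difference bound, neither of which changes the substance.
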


\begin{remark} 
Similar discussions as in Remark \ref{rk5} apply here. If one knows the optimal control $(\hat y,\hat\alpha, \hat\beta)$, it is easy to find the optimal dual process $Y$ and the adjoint process $(\hat p_2, \hat q_2)$ as (\ref{dual_FBSDE}) is a decoupled linear FBSDE given $(\hat y,\hat\alpha, \hat\beta)$. It is much more difficult, but most interesting, to find the optimal control $(\hat y,\hat\alpha, \hat\beta)$ using (\ref{dual_FBSDE}) and (\ref{dual_conditions}).
If $K=\mathbb{R}^N$, $S(t)=0$ and  $Q(t),R(t)$ are positive definite, then from Remark \ref{remark4}, $\phi$ is a quadratic function of $\alpha$ and $\beta$ and we can write optimal controls $\hat\alpha$ and $\hat\beta$ in terms of adjoint processes $\hat p_2$ and $\hat q_2$. The FBSDE (\ref{dual_FBSDE}) becomes a fully coupled linear FBSDE with an additional condition $\hat p_2(0)=x_0$, which is used to determine the constant control $\hat y$. 
\end{remark}

%++++++++++++++++++++++++++++++++++++++++++++++++
%====================================================
%\section{Relation of Primal and Dual Problems through FBSDEs}
%In this section, we show the connection between the optimal solutions of primal and dual problems through their corresponding FBSDEs. 

We can now state the dynamic relations of the optimal portfolio and wealth processes of the primal problem and the adjoint processes of the dual problem and vice versa.

\begin{theorem}\label{dual_to_primal}(From dual problem to primal problem)
Suppose that $ (\hat{y},\hat{\alpha},\hat{\beta}) $ is optimal for the dual problem. Let $ \left( Y^{(\hat{y},\hat{\alpha},\hat{\beta})}, \hat{p}_2,\hat{q}_2\right)  $  be the associated process that satisfies the FBSDE \eqref{dual_FBSDE} and condition \eqref{dual_conditions}. Define 
\begin{equation}\label{relations_pi_defn}
\hat{\pi}(t):= \left[\sigma'\right]^{-1}(t)\hat{q}_2(t),\ t\in[0,T].
\end{equation} 
Then $ \hat{\pi} $ is the optimal control for the primal problem with initial wealth $ x_0 $. The optimal wealth process and associated adjoint processes are given by
\begin{equation}\label{dual_to_primal_soln}
\left\{
\begin{array}{l}
X^{\hat{\pi}}(t)=\hat{p}_2(t),\\
\hat{p}_1(t)=Y^{(\hat{y},\hat{\alpha},\hat{\beta})}(t),\\
\hat{q}_1(t)=\sigma^{-1}(t)\hat{\beta}(t)-\theta(t)Y^{(\hat{y},\hat{\alpha},\hat{\beta})}(t) \textit{  for }\forall t\in[0,T].
\end{array}
\right.
\end{equation}
\end{theorem}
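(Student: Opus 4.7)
The plan is to start from an optimal dual solution $(\hat{y},\hat{\alpha},\hat{\beta})$ with its associated FBSDE triple $(Y^{(\hat{y},\hat{\alpha},\hat{\beta})},\hat{p}_2,\hat{q}_2)$ produced by Theorem \ref{Dual_Nece_Cond}, then to verify that the candidate $\hat{\pi}$ from \eqref{relations_pi_defn}, together with the identifications in \eqref{dual_to_primal_soln}, satisfies the primal FBSDE \eqref{primal_necesuff_FBSDE} and the pointwise optimality condition \eqref{primal_condition}, and finally to invoke the sufficient direction of Theorem \ref{primal_necesuff} to conclude optimality. The workhorse is the subgradient relation $(\hat{p}_2(t),[\sigma']^{-1}(t)\hat{q}_2(t))\in\partial\phi(t,\hat{\alpha}(t),\hat{\beta}(t))$ from \eqref{dual_conditions}, which by standard convex duality is equivalent to $(\hat{\alpha}(t),\hat{\beta}(t))\in\partial\tilde{f}(t,\hat{p}_2(t),\hat{\pi}(t))$ and unpacks into the pointwise relations $\hat{\alpha}=Q\hat{p}_2+S'\hat{\pi}$ and $\hat{\beta}-S\hat{p}_2-R\hat{\pi}\in N_K(\hat{\pi})$, alongside the feasibility statement $\hat{\pi}(t)\in K$.

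The first step is admissibility: the definition $\hat{\pi}=[\sigma']^{-1}\hat{q}_2$ combined with uniform boundedness of $\sigma^{-1}$ gives $\hat{\pi}\in\mathcal{H}^2(0,T;\mathbb{R}^N)$, while the second line of \eqref{dual_conditions} yields $\hat{\pi}(t)\in K$, so $\hat{\pi}\in\mathcal{A}$. Next I would identify $X^{\hat{\pi}}=\hat{p}_2$: substituting $\hat{q}_2=\sigma'\hat{\pi}$ into the BSDE for $\hat{p}_2$ in \eqref{dual_FBSDE} converts its dynamics into precisely the wealth SDE \eqref{wealth_process} driven by $\hat{\pi}$, and the initial condition $\hat{p}_2(0)=x_0$ from \eqref{dual_conditions} together with pathwise uniqueness for this linear SDE yields the identification.

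The third step is to show that $(\hat{p}_1,\hat{q}_1):=(Y^{(\hat{y},\hat{\alpha},\hat{\beta})},\sigma^{-1}\hat{\beta}-\theta Y^{(\hat{y},\hat{\alpha},\hat{\beta})})$ solves the primal adjoint BSDE \eqref{adjoint_BSDE} evaluated at $(X^{\hat{\pi}},\hat{\pi})$. The diffusion $\hat{q}_1'dW$ is read off directly from the forward SDE \eqref{dual_fsde} for $Y^{(\hat{y},\hat{\alpha},\hat{\beta})}$; the drift of $Y^{(\hat{y},\hat{\alpha},\hat{\beta})}$ is $\hat{\alpha}-rY^{(\hat{y},\hat{\alpha},\hat{\beta})}$, and, substituting the first subgradient relation $\hat{\alpha}=Q\hat{p}_2+S'\hat{\pi}=QX^{\hat{\pi}}+S'\hat{\pi}$, this rewrites as $-r\hat{p}_1+QX^{\hat{\pi}}+S'\hat{\pi}$, matching \eqref{adjoint_BSDE}. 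The terminal condition $\hat{p}_1(T)=-aX^{\hat{\pi}}(T)-c$ follows from $\hat{p}_2(T)=-(Y^{(\hat{y},\hat{\alpha},\hat{\beta})}(T)+c)/a$ in \eqref{dual_FBSDE} by solving for $Y^{(\hat{y},\hat{\alpha},\hat{\beta})}(T)$ and using $X^{\hat{\pi}}(T)=\hat{p}_2(T)$.

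Finally I would verify \eqref{primal_condition}. A direct computation using \eqref{dual_to_primal_soln} collapses the vector $\hat{p}_1(t)\sigma(t)\theta(t)+\sigma(t)\hat{q}_1(t)$ to $\hat{\beta}(t)$, since the two copies of $\sigma\theta Y^{(\hat{y},\hat{\alpha},\hat{\beta})}$ cancel after $\sigma\sigma^{-1}=I$ is used. The inequality \eqref{primal_condition} then reduces to a statement about $\hat{\pi}-\pi$ tested against an expression in $\hat{\beta}$, $SX^{\hat{\pi}}$, and $R\hat{\pi}$, which is exactly the normal cone inclusion $\hat{\beta}-SX^{\hat{\pi}}-R\hat{\pi}\in N_K(\hat{\pi})$ supplied by the second component of the subgradient relation. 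With \eqref{primal_necesuff_FBSDE} and \eqref{primal_condition} both established, the sufficient direction of Theorem \ref{primal_necesuff} delivers optimality of $\hat{\pi}$ for the primal problem. I expect the main obstacle to be keeping the sign and transpose conventions aligned between the Fenchel conjugate defining $\phi$ and the adjoint BSDE \eqref{adjoint_BSDE}; once the dictionary is fixed, each of the four identifications in \eqref{dual_to_primal_soln} is a one-line verification.
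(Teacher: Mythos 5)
Your proposal is correct and follows essentially the same route as the paper's own proof: read the third condition in \eqref{dual_conditions} as $(X^{\hat{\pi}}(t),\hat{\pi}(t))\in\partial\phi(\hat{\alpha}(t),\hat{\beta}(t))$, invert it by Fenchel duality to $(\hat{\alpha}(t),\hat{\beta}(t))\in\partial\tilde f(X^{\hat{\pi}}(t),\hat{\pi}(t))$, unpack this into the identity $\hat{\alpha}=QX^{\hat{\pi}}+S'\hat{\pi}$ and the normal-cone inclusion for $\hat{\beta}$, use these to check the primal FBSDE \eqref{primal_necesuff_FBSDE} and condition \eqref{primal_condition}, and close with the sufficiency half of Theorem \ref{primal_necesuff}. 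Your write-up is somewhat more explicit than the paper's (admissibility of $\hat{\pi}$, the identification $X^{\hat{\pi}}=\hat{p}_2$ via uniqueness of the wealth SDE, the collapse of $\hat{p}_1\sigma\theta+\sigma\hat{q}_1$ to $\hat{\beta}$), but the underlying argument is the same.
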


\begin{remark} 
The key benefit of Theorem \ref{dual_to_primal} is that if one can solve the dual problem, then one can get automatically  the optimal control $\hat\pi$ for the primal problem using (\ref{relations_pi_defn}). As discussed in Remark \ref{rk5}, it is in general  difficult to find the optimal control $\hat\pi$ using (\ref{primal_necesuff_FBSDE}) and (\ref{primal_condition}) directly. Section \ref{sec4.2} provides an example to illustrate this point.
\end{remark}

\begin{theorem}\label{primaltodual}(From primal problem to dual problem)
Suppose that $ \hat\pi\in\mathcal{A} $ is optimal for the primal problem with initial wealth $ x_0 $. Let $\left( X^{\hat{\pi}},\hat{p}_1,\hat{q}_1\right)$ be the associated process that satisfies the FBSDE \eqref{primal_necesuff_FBSDE} and  condition \eqref{primal_condition}. Define
\begin{equation}
\left\{
\begin{array}{l}
\hat{y}=\hat{p}_1(0),\vspace{2mm}\\
\hat{\alpha}(t)=Q(t)X^{\hat{\pi}}(t)+S'(t)\hat{\pi}(t),\vspace{2mm}\\
\hat{\beta}(t)=\sigma(t)\left[ \hat{q}_1(t)+\theta(t)\hat{p}_1(t) \right].
\end{array}\label{primal_to_dual_control}
\right.
\end{equation}
Then $ (\hat{y},\hat{\alpha},\hat{\beta}) $ is the optimal control for the dual problem.
The optimal dual state process and associated adjoint processes are given by
\begin{equation}\label{primal_to_dual_cond}
\left\{
\begin{array}{l}\vspace{1mm}
Y^{(\hat{y},\hat{\alpha},\hat{\beta})}(t)=\hat{p}_1(t),\\\vspace{1mm}
\hat{p}_2(t)=X^{\hat{\pi}}(t),\\\vspace{1mm}
\hat{q}_2(t)=\sigma'(t)\hat{\pi}(t).
\end{array}
\right.
\end{equation}
\end{theorem}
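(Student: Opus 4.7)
The plan is to verify the sufficient direction of Theorem \ref{Dual_Nece_Cond}: with the triple $(\hat y, \hat\alpha, \hat\beta)\in\mathbb{B}$ defined via \eqref{primal_to_dual_control}, I check that the candidate processes in \eqref{primal_to_dual_cond} are exactly the solution $(Y^{(\hat y,\hat\alpha,\hat\beta)}, \hat p_2, \hat q_2)$ of the dual FBSDE \eqref{dual_FBSDE} and that the three conditions \eqref{dual_conditions} are satisfied; dual optimality of $(\hat y,\hat\alpha,\hat\beta)$ then follows.

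For the forward SDE \eqref{dual_fsde}, substituting $Y=\hat p_1$ with the defined $\hat\alpha,\hat\beta$ produces the drift $\hat\alpha - r\hat p_1 = -r\hat p_1 + QX^{\hat\pi} + S'\hat\pi$ and the diffusion $\sigma^{-1}\hat\beta - \theta\hat p_1 = \hat q_1 + \theta\hat p_1 - \theta\hat p_1 = \hat q_1$, which are precisely the coefficients of $\hat p_1$ in the primal adjoint BSDE \eqref{adjoint_BSDE}; together with $\hat y = \hat p_1(0)$, uniqueness of the linear SDE forces $Y^{(\hat y,\hat\alpha,\hat\beta)}\equiv \hat p_1$. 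For the dual adjoint equation \eqref{dual_adjoint_BSDE}, the candidates $\hat p_2 = X^{\hat\pi}$ and $\hat q_2 = \sigma'\hat\pi$ inherit the drift $rX^{\hat\pi} + \hat\pi'\sigma\theta$ and the diffusion $\hat\pi'\sigma$ from the wealth SDE \eqref{wealth_process}, using $\hat q_2'\theta = \hat\pi'\sigma\theta$; the terminal condition $\hat p_2(T) = -(\hat p_1(T)+c)/a$ is equivalent to $\hat p_1(T) = -aX^{\hat\pi}(T) - c$, which is the terminal condition in \eqref{adjoint_BSDE}.

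The three conditions \eqref{dual_conditions} are then checked separately. The initial condition $\hat p_2(0)=X^{\hat\pi}(0)=x_0$ is immediate, and $[\sigma']^{-1}(t)\hat q_2(t)=\hat\pi(t)\in K$ by admissibility of $\hat\pi$. For the subdifferential inclusion, by the biconjugate theorem $(\hat p_2,[\sigma']^{-1}\hat q_2)\in \partial\phi(\hat\alpha,\hat\beta)$ is equivalent to $(\hat\alpha,\hat\beta)\in \partial\tilde f(X^{\hat\pi},\hat\pi)$ with $\tilde f = f + \Psi_K$. Splitting into components using $\partial\tilde f = \partial f + \{0\}\times N_K$, the $x$-component $\hat\alpha = QX^{\hat\pi}+S'\hat\pi = \partial_x f(X^{\hat\pi},\hat\pi)$ is exactly the definition of $\hat\alpha$, while the $\pi$-component reduces, via the identity $\sigma\hat q_1 + \sigma\theta\hat p_1 = \hat\beta$, to $\hat\beta - \partial_\pi f(X^{\hat\pi},\hat\pi)\in N_K(\hat\pi)$, which is precisely a normal-cone rewriting of the primal variational inequality \eqref{primal_condition} guaranteed by Theorem \ref{primal_necesuff}.

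The main obstacle is the translation in the last step between the primal variational inequality and the Fenchel subdifferential inclusion via the decomposition $\partial(f+\Psi_K)=\partial f + \{0\}\times N_K$ together with the characterisation of $N_K(\hat\pi)$ at $\hat\pi\in K$. A secondary technical point is that Theorem \ref{Dual_Nece_Cond} is stated under Assumption \ref{assum_monoton}; either one verifies this assumption for the constructed $(\hat y,\hat\alpha,\hat\beta)$ (straightforward in the quadratic setting, given the uniform bounds on the coefficients and the $L^2$ controls), or one bypasses it by a direct no-duality-gap argument: applying It\^o's formula to the product $\hat p_1 X^{\hat\pi}$ and using the Fenchel--Young equalities obtained from the subdifferential relations established above yields $\tilde\Psi(\hat y,\hat\alpha,\hat\beta) = -J(\hat\pi)$, and by weak duality (see Remark 2.1) this forces $(\hat y,\hat\alpha,\hat\beta)$ to be a minimiser of $\tilde\Psi$ over $\mathbb{B}$.
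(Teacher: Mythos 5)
Your proposal is correct and follows essentially the same route as the paper's proof: substitute the candidate processes \eqref{primal_to_dual_cond} into the dual FBSDE \eqref{dual_FBSDE}, verify the three conditions \eqref{dual_conditions} --- the third by translating the primal variational inequality into the conjugate inclusion $(\hat{\alpha}(t),\hat{\beta}(t))\in\partial\tilde f(X^{\hat{\pi}}(t),\hat{\pi}(t))$ --- and invoke the sufficiency half of Theorem \ref{Dual_Nece_Cond}; your concern about Assumption \ref{assum_monoton} is legitimate but moot, since the sufficiency direction of that theorem is proved without using the assumption. One caveat on your last step: the inclusion you need, $\hat{\beta}(t)-S(t)X^{\hat{\pi}}(t)-R(t)\hat{\pi}(t)\in N_K(\hat{\pi}(t))$, i.e.\ $(\hat{\pi}(t)-\pi)'\left[\hat{p}_1(t)\sigma(t)\theta(t)+\sigma(t)\hat{q}_1(t)-S(t)X^{\hat{\pi}}(t)-R(t)\hat{\pi}(t)\right]\geq 0$ for all $\pi\in K$, is indeed what the G\^{a}teaux-derivative computation in the proof of Theorem \ref{primal_necesuff} delivers, but it differs from \eqref{primal_condition} as printed by the sign of $S(t)X^{\hat{\pi}}(t)+R(t)\hat{\pi}(t)$, so your ``precisely a normal-cone rewriting of \eqref{primal_condition}'' holds only after correcting that sign (the paper's own proof of Theorem \ref{dual_to_primal} implicitly uses the corrected version as well).
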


\begin{remark}
The main results (Theorems \ref{primal_necesuff}, \ref{Dual_Nece_Cond}, \ref{dual_to_primal} and \ref{primaltodual}) can be extended to utility maximization problems with quadratic cost functions being replaced by utility functions. The ideas are similar but proofs are much more complicated as utility functions are only defined on the positive real line with unbounded non-Lipschitz derivatives, in contrast to quadratic functions which are defined on the whole real line with linear derivatives. The authors have a separate paper discussing the details of dynamic relations of the primal and dual problems for general utility functions with control constraints via maximum principles and FBSDEs and will publish the results elsewhere. 
\end{remark}

%%%%%%%%%%%%%%%%%%%%%%%%%%%%%%%%%%%%%%
%%%%%%%%%%%%%%%%%%%%%%%%%%%%%%%%%%%%%%%
\section{Quadratic Risk Minimization with Cone Constraints}
In this section we consider the following quadratic risk minimization problem: 
\begin{equation}\label{quad_min_uncon_primal}
\left\{
\begin{array}{l}
\textit{Minimize } J(\mathbb{\pi}(\cdot))=E\left[ \dfrac{1}{2} aX(T)^2 \right],\vspace{1.5mm} \\
\textit{Subject to } \left(X(\cdot),\pi(\cdot)\right) \textit{ is admissible}.
\end{array}
\right.
\end{equation}
Assume $ K\subset \mathbb{R}^N $ is a closed convex cone.
The dual problem is given by
\begin{equation}\label{quad_min_uncon_dual}
\textit{Minimize }x_0y+E\left[  \dfrac{Y(T)^2}{2a}  \right]+E\left[ \int_0^T \delta(\beta(t))dt \right]
\end{equation}
over $ (y,\beta)\in\mathbb{R}\times {\cal H}^2(0,T; \mathbb{R}^N)$, where  $ Y $ satisfies the SDE \eqref{dual_fsde} with $\alpha(t)=0$ and $\delta(\beta)=\sup_{\pi\in K} \pi'\beta$,  the support function of  $K$. \cite[Proposition 5.4]{heunislabbe:constrainedMV} states that there exists an optimal control $ (\hat{y},\hat{\beta}) $ to \eqref{quad_min_uncon_dual} with associated optimal state process $ \hat Y$. 

\begin{remark}\label{remark10}
Since  $K$ is unbounded,  Assumption \ref{assum_monoton} may not hold.  Using the subadditivity and positive homogeneity of $\delta$, we have  (see (\ref{necedual_proof_5}))
\begin{equation}\label{temp1}
E\left[ \int_0^T \left[ \delta(\beta(t))-\hat{q}_2'(t)\sigma^{-1}(t){\beta}(t) \right]dt \right] \geq 0. 
\end{equation}
Let $B:=\{(\omega,t)\in \Omega\times [0,T]: \left[\sigma'\right]^{-1}(t)\hat{q}_2(t)\in K\}$. By \cite[Lemma 5.4.2]{karatzasshreve:mathfinance}, there exists $\nu\in {\cal P}(0,T; \mathbb{R}^N)$ such that $|\nu(t)|\leq 1$ and $|\delta(\nu(t))|\leq 1$ and 
\begin{eqnarray*}
 \left[\sigma'\right]^{-1}(t)\hat{q}_2(t)\in K &\Leftrightarrow& \nu(t)=0,\\
\left[\sigma'\right]^{-1}(t)\hat{q}_2(t)\not\in K  &\Leftrightarrow&
\delta(\nu(t))-\hat{q}_2'(t)\sigma^{-1}(t){\nu}(t)<0
\end{eqnarray*}
 for $ \left(\mathbb{P}\otimes Leb\right) $-a.e. $ (\omega,t)\in \Omega\times [0,T] $.
The existence of $\nu$ ensures that the complement set of $B$ has measure zero on $\Omega\times [0,T]$ (otherwise there is a contradiction to (\ref{temp1})). Hence we conclude 
$ \left[\sigma'\right]^{-1}(t)\hat{q}_2(t)\in K$ for $ \left(\mathbb{P}\otimes Leb\right) $-a.e. The third relation in (\ref{dual_conditions}) can also be proved directly. 
\end{remark}
%=============================
\subsection{Random coefficient case}\label{CQRM}
We  have the following result. 
\begin{lemma}
Let $ (\hat{y},\hat{\beta}) $ be the optimal control of the dual problem \eqref{quad_min_uncon_dual} and $ \hat{Y} $ be the corresponding optimal state process. Then $ \hat{\beta}(t)=0 $ if $ \hat{Y}(t)=0 $ for $ \left(\mathbb{P}\otimes Leb\right) $-a.e. $ (\omega,t)\in \Omega\times [0,T] $.
\end{lemma}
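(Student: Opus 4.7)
The plan is to exploit a fundamental property of continuous semimartingales: the set of their zeros carries no quadratic-variation mass. Since $\hat{Y}$ is the strong solution of the linear SDE \eqref{dual_fsde} with $\alpha\equiv 0$, it is a continuous semimartingale whose quadratic variation is
\begin{equation*}
d\langle \hat{Y}\rangle_t = \bigl|\sigma^{-1}(t)\hat{\beta}(t) - \theta(t)\hat{Y}(t)\bigr|^2 dt.
\end{equation*}

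The key step is the occupation times density formula for continuous semimartingales, which, applied to the indicator of the singleton $\{0\}$ (a Lebesgue-null set), gives
\begin{equation*}
\int_0^T \mathbf{1}_{\{\hat{Y}(t)=0\}}\, d\langle \hat{Y}\rangle_t = 0 \quad \mathbb{P}\text{-a.s.}
\end{equation*}
On $\{\hat{Y}(t)=0\}$ the integrand reduces to $|\sigma^{-1}(t)\hat{\beta}(t)|^2$, so $\int_0^T \mathbf{1}_{\{\hat{Y}(t)=0\}}\, |\sigma^{-1}(t)\hat{\beta}(t)|^2\, dt = 0$ almost surely. To convert this into a bound on $\hat{\beta}$ itself I would invoke the uniform non-degeneracy $z'\sigma(t)\sigma'(t)z \geq k|z|^2$ together with the uniform boundedness of $\sigma$, which jointly yield a constant $c_0>0$ such that $|\sigma^{-1}(t)\xi|^2 = \xi'(\sigma(t)\sigma'(t))^{-1}\xi \geq c_0|\xi|^2$ for every $\xi\in\mathbb{R}^N$ and $t\in[0,T]$. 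Hence $\int_0^T \mathbf{1}_{\{\hat{Y}(t)=0\}}\, |\hat{\beta}(t)|^2\, dt = 0$ almost surely, and Tonelli's theorem then delivers $\mathbf{1}_{\{\hat{Y}(t)=0\}}\, \hat{\beta}(t) = 0$ for $(\mathbb{P}\otimes Leb)$-a.e.\ $(\omega,t)\in\Omega\times[0,T]$, which is the claim.

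There is essentially no obstacle in this plan; the argument rests on the pathwise structure of the dual SDE alone. The single point deserving attention is the passage from a bound on $|\sigma^{-1}\hat{\beta}|^2$ to one on $|\hat{\beta}|^2$, which is exactly what the uniform non-degeneracy hypothesis is designed to provide. It is perhaps worth remarking that the optimality of $(\hat{y},\hat{\beta})$ is never used: the same conclusion holds for the state process of \emph{any} admissible dual control. An alternative variational route --- replacing $\hat{\beta}$ by $\hat{\beta}\,\mathbf{1}_{\{\hat{Y}\neq 0\}}$ on the zero set and comparing costs --- is conceivable but considerably more delicate, since it would need to track how the state process itself changes under the perturbation; the occupation-times approach sidesteps this difficulty entirely.
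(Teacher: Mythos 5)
Your argument is correct, and it takes a genuinely different route from the paper's. The paper proves the lemma variationally: it applies It\^o's formula to $\hat Y(t)^2$, expresses $E[\hat Y(T)^2/(2a)]$ as an integral whose running cost contains $|\sigma^{-1}(t)\hat\beta(t)-\theta(t)\hat Y(t)|^2$, and then argues that if the set $\Pi=\{\hat Y(t)=0,\ \hat\beta(t)\neq 0\}$ had positive measure, replacing $\hat\beta$ by $0$ on $\Pi$ would strictly lower the dual value, contradicting optimality --- this is exactly the ``alternative variational route'' you flag at the end, and your caveat about it is apt: as written in the paper the comparison is made on the integrand along the \emph{original} state path, while the perturbed control generates a different state process, a point the paper does not address. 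Your occupation-times argument sidesteps this entirely: the identity $\int_0^T \mathbf{1}_{\{\hat Y(t)=0\}}\,d\langle \hat Y\rangle_t=0$ is a standard consequence of the occupation density formula (local time integrated against the Lebesgue-null set $\{0\}$), the reduction of the quadratic-variation density to $|\sigma^{-1}(t)\hat\beta(t)|^2$ on $\{\hat Y(t)=0\}$ is immediate, and the passage from $|\sigma^{-1}\hat\beta|^2$ to $|\hat\beta|^2$ correctly uses that the uniform boundedness of $\sigma$ forces $(\sigma\sigma')^{-1}\geq c_0 I$. What your approach buys is twofold: it is purely pathwise, avoiding the delicate control-perturbation step, and it shows the conclusion holds for the state process of \emph{any} admissible dual control, not just the optimal one --- a strictly stronger statement than the lemma. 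What it costs is the importation of local-time machinery where the paper stays within elementary It\^o calculus and convex-optimality reasoning consistent with the rest of Section 4.
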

\begin{proof}
Applying Ito's formula to $ \hat Y(t)^2 $, we get
\begin{align*}
d\hat Y(t)^2=&\left[-2r(t)\hat Y(t)^2+\left( \sigma^{-1}(t)\hat{\beta}(t)-\theta(t)\hat Y(t) \right)'\left( \sigma^{-1}(t)\hat{\beta}(t)-\theta(t)\hat Y(t) \right)\right]dt\\
&{} +2\hat Y(t)\left[ \sigma^{-1}(t)\hat{\beta}(t)-\theta(t)\hat Y(t) \right]' dW(t).
\end{align*}
Define the process
\begin{equation*}
\tilde S(t):= \int_0^t 2\hat Y(s)[ \sigma^{-1}(s)\hat{\beta}(s)-\theta(s)\hat Y(s) ]' dW(s).
\end{equation*}
Following a similar argument as in the proof of Theorem \ref{primal_necesuff}, we know $\tilde  S $ is a martingale. Taking expectation of $ \dfrac{\hat Y(T)^2}{2a} $, we have
\begin{align*}
&E\left[\dfrac{\hat Y(T)^2}{2a}\right] := \\
&E \left[\dfrac{\hat{y}^2}{2a}\right]
+E\left[\int_0^T\bigg[-\dfrac{r(t)\hat Y(t)^2}{a}+\dfrac{\left( \sigma^{-1}(t)\hat{\beta}(t)-\theta(t)\hat Y(t) \right)'\left( \sigma^{-1}(t)\hat{\beta}(t)-\theta(t)\hat Y(t) \right)}{2a}\bigg]dt\right].
\end{align*}
Define the set 
\begin{equation*}
\Pi:= \left\{ (\omega,t)\in  \Omega\times [0,T] : \hat Y(t)=0, \hat{\beta}(t)\neq 0 \right\}.
\end{equation*}
We must have  $ \left(\mathbb{P}\otimes Leb\right)(\Pi)=0 $,  otherwise, 
we may replace $\hat{\beta}(t)$ by 0 on the set $\Pi$ and keep the same   $\hat{\beta}(t)$ on the complement of $\Pi$, then we get the dual value strictly less than the one using   $\hat{\beta}(t)$ everywhere, which is a  contradiction to the optimality of $\hat{\beta}(t)$.
\end{proof}
Let $ \hat{\beta}(t)=\hat{\gamma}(t)\hat Y(t) $ for $  t\in[0,T] $. Then $ \hat Y$ follows the SDE
\begin{equation*}
\left\{
\begin{array}{l}
d\hat Y(t)=-r(t)\hat Y(t)dt+\left[\sigma^{-1}(t)\hat{\gamma}(t)-\theta(t)\right]' \hat Y(t)dW(t)\\
\hat Y(0)=\hat{y}.
\end{array}
\right.
\end{equation*}
Hence, we have $ \hat Y(t)=\hat{y}\hat{H}(t) $, where
\begin{align*}
\hat{H}(t):= \exp\bigg(& \int_0^t \left[ -r(s)-\dfrac{1}{2}\left(\sigma^{-1}(s)\hat{\gamma}(s)-\theta(s)\right)'\left(\sigma^{-1}(s)\hat{\gamma}(s)-\theta(s)\right) \right]ds \\
& + \left[\sigma^{-1}(s)\hat{\gamma}(s)-\theta(s)\right]' dW(s) \bigg).
\end{align*}
Let $ \Gamma $ satisfy the linear SDE 
\begin{equation*}
d\Gamma(t)=\Gamma(t)[-r(t)dt-\theta'(t)dW(t)], \ \Gamma(0)=1.
\end{equation*}
By Theorem \ref{dual_to_primal}, also noting $\Gamma(t)\hat p_2(t)$ is a martingale, we obtain
\begin{equation*}
\hat{p}_2(0)=E\left[ \Gamma(T)\hat p_2(T) \right]= E\left[ -\Gamma(T)\dfrac{\hat Y(T)}{a} \right]=-\hat{y}E\left[\Gamma(T)\dfrac{\hat{H}(T)}{a}\right]=x_0,
\end{equation*}
which implies 
\begin{equation*}
\hat{y}=-\dfrac{x_0}{E\left[\dfrac{\Gamma(T)\hat{H}(T)}{a}\right]}.
\end{equation*}
%%%%%%%%%%%%%%%%%%%%%%%%%%%%%%%%%%%%%%%%%%%%%%%%%%%%%%%%%%%%%%%%%%%%%%%%%%%%%%%%%%%%%%%%%%%%
Moreover, we have
$$
\hat{p}_2(t)= \Gamma(t)^{-1}E\left[ -\Gamma(T)\dfrac{\hat Y^(T)}{a} \bigg| \mathcal{F}_t \right] 
=-\hat{y}\Gamma(t)^{-1}E\left[ \Gamma(T)\dfrac{\hat H^(T)}{a} \bigg| \mathcal{F}_t \right],  
$$
which shows that 
 $ \hat{p}_2(t)\neq 0 \ \mathbb{P}$-a.e. for $ t\in[0,T] $.

%%%%%%%%%%%%%%%%%%%%%%%%%%%%%%%%%%%%%%%%%%%%%%%%%%%%%%%%%%%%%%%%%%%%%%%%%%%%%%%%%%%%%%%%%%%%
Suppose $ x_0>0 $, then $ \hat{Y}(t)<0 $ and $ \hat{p}_2(t)>0 $ for $ \forall t\in[0,T], \ \mathbb{P} $-a.e. Define
\begin{equation*}
P_+(t):= -\dfrac{\hat Y(t)}{\hat{p}_2(t)}=-\dfrac{\hat{p}_1(t)}{\hat{X}(t)}, \ \forall t\in[0,T].
\end{equation*}
Applying Ito's formula, we have
\begin{align}\nonumber
dP_+(t)=&\left[ -2r(t)P_+(t)-P_+(t)\dfrac{\hat{\pi}'(t)}{\hat{X}(t)}\sigma(t)\theta(t)+\dfrac{\pi'(t)\sigma(t)\hat{q}_1(t)}{\hat{X}(t)^2} + \dfrac{P_+(t)\pi'(t)\sigma(t)\sigma'(t)\pi(t)}{\hat{X}(t)^2} \right]dt\\\nonumber
&+\left[ -\dfrac{\hat{q}_1(t)}{\hat{X}(t)}-P_+(t)\sigma'(t)\dfrac{\pi(t)}{\hat{X}(t)} \right]' dW(t),\\\label{quad_min_con_eqn1}
=&\left[ -2r(t)P_+(t)-\hat{\xi}_+'(t)\left( \sigma(t)\theta(t)P_+(t)+\sigma(t)\Lambda_+(t) \right) \right]dt + \Lambda_+'(t)dW(t),
\end{align}
where 
\begin{equation*}
\Lambda_+(t):= -\dfrac{\hat{q}_1(t)}{\hat{X}(t)}-\dfrac{P_+(t)\sigma'(t)\pi(t)}{\hat{X}(t)}, \ \hat{\xi}_+(t):= \dfrac{\hat{\pi}(t)}{\hat{X}(t)}.
\end{equation*}
Define 
\begin{align*}
H_+(t,v,P,\Lambda):=& v' P\sigma(t)\sigma'(t)v + 2v'\left[ \sigma(t)\theta(t)P+\sigma(t)\Lambda \right],\\
H^*_+(t,P,\Lambda):=& \inf_{v\in K}H_+(t,v,P,\Lambda).
\end{align*}
We have
\begin{align*}
\partial_v H_+(t,\hat{\xi}_+(t),P_+(t),\Lambda_+(t))= &2\left[ P_+(t)\sigma(t)\sigma'(t)\dfrac{\hat{\pi}(t)}{\hat{X}(t)}+\sigma(t)\theta(t)P_+(t)+\sigma(t)\Lambda_+(t) \right]\\
=& 2\left[ -\sigma(t)\dfrac{\hat{q}_1(t)}{\hat{X}(t)}-\sigma(t)\theta(t)\dfrac{\hat{p}_1(t)}{\hat{X}(t)} \right].
\end{align*}
Recall that by Theorem \ref{primal_necesuff}, we have
\begin{equation}\label{quad_min_con_dual_BSDE_eqn1}
[\hat{\pi}(t)-\pi]'[\hat{p}_1(t)\sigma(t)\theta(t)+\sigma(t)\hat{q}_1(t)]\geq 0
\end{equation} 
for $\left(\mathbb{P}\otimes Leb\right) $-a.e. $ (\omega,t)\in \Omega\times [0,T] $ and $ \pi\in K $.
According to Theorem \ref{primaltodual}, $ \hat{X}(t)=\hat{p}_2(t)>0 $. Dividing both sides of \eqref{quad_min_con_dual_BSDE_eqn1}  by $ \hat{X}(t)^2 $, we obtain that 
\begin{equation*}
[\hat{\xi}_+(t)-\xi]'\partial_v H_+(t,\hat\xi_+(t),P_+(t),\Lambda_+(t))\leq 0
\end{equation*}
for $\left(\mathbb{P}\otimes Leb\right)$-a.e. $ (\omega,t)\in \Omega\times [0,T] $ and $ \xi\in K $. By \cite[Proposition 2.2.1]{ekelandtemam:convexanalysis}, we conclude that 
\begin{equation}\label{H_star_tosub}
H^*_+(t,P_+(t),\Lambda_+(t))=H_+(t,\hat{\xi}_+(t),P_+(t),\Lambda_+(t)) \ \forall t\in[0,T],\ \mathbb{P}-a.e.
\end{equation}
Moreover, by \cite[Page 52, Corollary]{clarke:optimization}, we have
\begin{equation*}
0\in P_+(t)\sigma(t)\sigma'(t)\hat{\xi}_+(t)+\sigma(t)[\theta(t)P_+(t)+\Lambda_+(t)]+N_K(\hat{\xi}_+(t)), \ \forall t\in[0,T]\ \mathbb{P}\textit{-a.e.}
\end{equation*}
where  $ N_K(x) :=\{ p\in\mathbb{R}^N: p'(x^*-x)\leq 0, \forall x^*\in K \} $, the normal cone of $K$ at $x\in K$. For all $p\in N_K(x) $, since $K$ is a cone, by choosing $x^*=2x$ and $x^*={1\over 2}x$, 
 we have $ p'x\leq 0 $ and $ -\frac{1}{2} p'x\leq 0 $, which gives $ p'x=0 $. Therefore  
\begin{equation}\label{normalconeequality}
\hat{\xi}_+'(t)P_+(t)\sigma(t)\sigma'(t)\hat{\xi}_+(t)+\hat{\xi}_+'(t)\sigma(t)[\theta(t)P_+(t)+\Lambda_+(t)]=0.
\end{equation}
Substituting \eqref{normalconeequality} into \eqref{H_star_tosub}, we obtain
\begin{equation}\label{quad_min_con_eqn2}
H^*_+(t,P_+(t),\Lambda_+(t))=\hat{\xi}_+'(t)\left[ \sigma(t)\theta(t)P_+(t)+\sigma(t)\Lambda_+(t)\right] \ \forall t\in[0,T].
\end{equation}
Substituting \eqref{quad_min_con_eqn2} back into \eqref{quad_min_con_eqn1}, we have that $ P_+ $ is the solution to the following nonlinear BSDE
\begin{equation}\label{quad_min_con_dual_BSDE_P+}
\left\{
\begin{array}{l}
dP_+(t)=-\left[ 2r(t)P_+(t)+H^*_+(t,P_+(t),\Lambda_+(t)) \right]dt+\Lambda'_+(t)dW(t),\\
P_+(T)=a,\\
P_+(t)>0. \ \forall t\in[0,T].
\end{array}
\right.
\end{equation}
Similarly, if $ x_0<0 $, then $ \hat Y(t)>0 $ and $ \hat{p}_2(t)<0 $ for $ t\in[0,T], \ \mathbb{P} $-a.e. Define
\begin{equation*}
P_-(t):= -\dfrac{\hat Y(t)}{\hat{p}_2(t)}=-\dfrac{\hat{p}_1(t)}{\hat{X}(t)}, \ \forall t\in[0,T].
\end{equation*} 
Using a similar approach, it can be shown that $ P_- $ is the solution to the following nonlinear BSDE
\begin{equation}\label{quad_min_con_dual_BSDE_P-}
\left\{
\begin{array}{l}
dP_-(t)=-\left[ 2r(t)P_-(t)+H^*_-(t,P_-(t),\Lambda_-(t)) \right]dt+\Lambda'_-(t)dW(t),\\
P_-(T)=a,\\
P_-(t)>0, \ \forall t\in[0,T].
\end{array}
\right.
\end{equation}
where 
\begin{align*}
H_-(t,v,P,\Lambda):=& v' P\sigma(t)\sigma'(t)v - 2v'\left[ \sigma(t)\theta(t)P+\sigma(t)\Lambda \right],\\
H^*_-(t,P,\Lambda):=& \inf_{v\in K}H_-(t,v,P,\Lambda).
\end{align*}

We find that \eqref{quad_min_con_dual_BSDE_P+} and \eqref{quad_min_con_dual_BSDE_P-} are  the extended SRE introduced in \cite{huzhou:constrainedmv}. Through the dual approach, we have obtained an explicit representation of the unique solution to the SREs in terms of the optimal state and adjoint processes. Finally, according to Theorem \ref{dual_to_primal} we conclude that the optimal solution to the primal problem is given by
\begin{equation*}
\left\{
\begin{array}{l}
\hat{\pi}'(t)=[\sigma']^{-1}(t)\hat{q}_2(t),\\
\hat{X}(t)=\hat{p}_2(t)=-\hat Y(t)\left[\dfrac{1_{\{x_0>0\}}}{P_+(t)}+\dfrac{1_{\{x_0<0\}}}{P_-(t)}\right].
\end{array}
\right.
\end{equation*}

\begin{remark}
If $ K=\mathbb{R}^N $, then we must have the optimal control $\hat\beta(t)=0$, which leads to $\hat\gamma(t)=0$ and $\hat H(t)=\Gamma(t)$  for $ t\in[0,T] $ a.e.  Condition (\ref{quad_min_con_dual_BSDE_eqn1}) is equivalent to
$$ \hat p_1(t)\theta(t)+\hat q_1(t)=0.$$
Replacing $\hat p_1(t)$ and $\hat q_1(t)$ by $P_+(t)$, $\Lambda_+(t)$ and $\hat\xi_+(t)$, we have
$$ \sigma\rq{}(t)\hat\xi_+(t)+ \theta(t) + {\Lambda_+(t) \over P_+(t)} = 0.$$
BSDE (\ref{quad_min_con_eqn1}) (or (\ref{quad_min_con_dual_BSDE_P+})) becomes
\begin{align*}
dP_+(t)=\left[ -2r(t)P_+(t)+2\theta'(t)\Lambda_+(t)+\theta'(t)\theta(t)P_+(t)+\dfrac{\Lambda'_+(t)\Lambda_+(t)}{P_+(t)} \right]dt+\Lambda'_+(t)dW(t),
\end{align*}
which is  the SRE introduced in \cite{zhoulim:MVrandom}. Using the duality approach, we obtain an explicit representation of the unique solution to the SRE. 
\end{remark}

%%%%%%%%%%%%%%%%%%%%%%%%%%%%%%%%%
%==========================
\subsection{Deterministic coefficient case} \label{sec4.2}
Assume $ K\subset \mathbb{R}^N $ is a closed convex cone and
 $ r,b,\sigma $ are deterministic functions and $ a>0 $ is a constant. In this case, the dual problem can be written as
$$\mbox{Minimize }
 x_0y+E\left[\dfrac{Y(T)^2}{2a}\right]$$
over $ (y,\beta)\in\mathbb{R}\times {\cal H}^2(0,T; \mathbb{R}^N)$  and $ Y $ satisfies the SDE \eqref{dual_fsde} with $\alpha(t)=0$ and $\beta(t)\in K^0$ for $  t\in[0,T] $ a.e., 
where $ K^0:= \{ \beta:\beta'\pi \leq 0, \forall \pi\in K\}$, the polar cone of $K$. We solve the above problem in  two steps: first,  fix $ y $ and find the optimal control $ \hat{\beta}(y) $; second,  find the optimal $ \hat{y} $. We can then construct the optimal solution explicitly.

\medskip\noindent\textbf{Step 1:} Consider the associated HJB equation:
\begin{equation}\label{quad_min_con_det_dual_bellman}
\left\{
\begin{array}{l}
v_t(s,y)-r(s)yv_y(s,y)+\frac{1}{2}\inf_{\beta\in K^0}|\sigma^{-1}(s)\beta-\theta(s)y|^2v_{yy}(s,y)=0,\vspace{2mm}\\
v(T,y)=y^2,
\end{array}
\right.
\end{equation}
for each $ (s,y)\in [t,T]\times\mathbb{R} $. The infimum term in \eqref{quad_min_con_det_dual_bellman} can be written explicitly as 
\begin{enumerate}
\item If $ y=0 $, then it is trivial to obtain that
\begin{align*}
\inf_{\beta\in K^0}| \sigma^{-1}(s)\beta-\theta(s)y |^2=\inf_{\beta\in K^0}| \sigma^{-1}(s)\beta|^2 = 0.
\end{align*}
\item If $ y> 0 $, then we have
\begin{eqnarray*}
\inf_{\beta\in K^0}| \sigma^{-1}(s)\beta-\theta(s)y |^2
&=& y^2 \inf_{\beta\in K^0}\bigg| \sigma^{-1}(s)\left( \dfrac{\beta}{y}\right)-\theta(s) \bigg|^2\\
&=& y^2 \inf_{y\bar{\beta}\in K^0}\bigg| \sigma^{-1}(s)\bar{\beta}-\theta(s) \bigg|^2\\
&=& y^2 | \sigma^{-1}(s)\beta_+(s)-\theta(s) |^2,
\end{eqnarray*}
where $ \beta_+(s):= \argmin_{\beta\in K^0} | \sigma^{-1}(s)\beta-\theta(s) |^2 $.
\item If $ y< 0 $, then similarly we have
\begin{align*}
\inf_{\beta\in K^0}| \sigma^{-1}(s)\beta-\theta(s)y |^2=&y^2 \inf_{\beta\in K^0}\bigg| \sigma^{-1}(s)\dfrac{\beta}{y}-\theta(s) \bigg|^2 \\
=&y^2 \inf_{\bar{\beta}\in K^0}\bigg| \sigma^{-1}(s)\bar{\beta}+\theta(s) \bigg|^2\\
=& y^2 | \sigma^{-1}(s)\beta_-(s)+\theta(s) |^2,
\end{align*}
where $ \beta_-(s):= \argmin_{\beta\in K^0} | \sigma^{-1}(s)\beta+\theta(s) |^2 $.
\end{enumerate}
Define
\begin{equation*}
\sigma(s,y) :=
\left\{
\begin{array}{lcl}
\sigma^{-1}(s)\beta_+(s)-\theta(s), &\textit{if } y>0\\
\sigma^{-1}(s)\beta_-(s)+\theta(s), &\textit{if } y<0\\
0, &\textit{if } y=0.
\end{array}
\right.
\end{equation*}
The HJB equation \eqref{quad_min_con_det_dual_bellman} becomes
\begin{equation*}
\left\{
\begin{array}{l}
v_t(s,y)-r(s)yv_y(s,y)+\frac{1}{2}y^2|\sigma(s,y)|^2v_{yy}(s,y)=0,\\
v(T,y)=y^2.
\end{array}
\right.
\end{equation*}
According to the Feynman-Kac formula, we have
\begin{equation*}
v(t,y)=E\left[ Y^2(T)|Y(t)=y \right]=y^2e^{\int_t^T [-2r(s)+ |\sigma(s,Y(s))|^2]ds},
\end{equation*}
where the stochastic process $ Y $ follows the following geometric Brownian motion
\begin{equation*}
dY(s)=-r(s)Y(s)ds+\sigma'(s,Y(s))Y(s)dW(s), \ Y(t)=y.
\end{equation*}
Moreover, since $ Y $ follows a geometric Brownian motion and $ \sign(Y(s))=\sign(y), \ \forall s\in[t,T] $, we have
\begin{equation*}
\sigma(s,Y(s))=\sigma(s,y), \ \forall s\in[t,T].
\end{equation*}
In particular, we have
\begin{equation}\label{quad_min_con_det_dual_step1}
v(0,y)=y^2e^{\int_0^T [-2r(s)+ |\sigma(s,y)|^2]ds}.
\end{equation}

\medskip\noindent\textbf{Step 2:} Consider the following static optimization problem: 
\begin{equation}\label{quad_min_con_det_dual_step2}
\inf_{y\in\mathbb{R}}x_0y+\frac{1}{2a}v(0,y)
\end{equation}
Substituting \eqref{quad_min_con_det_dual_step1} into the objective function, we obtain that problem \eqref{quad_min_con_det_dual_step2} achieves minimum at
\begin{equation*}
\hat{y}=-ax_0e^{\int_0^T [2r(s)-|\sigma(s,-x_0)|^2]ds}.
\end{equation*}
Hence, we conclude that the optimal control is given by
\begin{equation*}
\hat{\beta}(t)=
\left\{
\begin{array}{lc}
ax_0e^{\int_t^T [2r(s)-|\sigma(s,-x_0)|^2]ds}\beta_-(t), &\textit{ if } x_0>0 \vspace{2mm}\\
-ax_0e^{\int_t^T [2r(s)-|\sigma(s,-x_0)|^2]ds}\beta_+(t), &\textit{ if } x_0<0 \vspace{2mm}\\
0, &\textit{ if } x_0=0.
\end{array}
\right.
\end{equation*}
Using the dual optimal control $(\hat y, \hat\beta)$, we can find a solution $(\hat Y, \hat p_2, \hat q_2)$ to the dual FBSDE 
(\ref{dual_FBSDE}) and (\ref{dual_conditions}), and then apply Theorem \ref{dual_to_primal} to  construct a solution $(\hat X, \hat p_1, \hat q_1)$ to the primal FBSDE 
(\ref{primal_necesuff_FBSDE}) and (\ref{primal_condition}).   
Moreover, in this case we can construct a solution to the SREs \eqref{quad_min_con_dual_BSDE_P+} and \eqref{quad_min_con_dual_BSDE_P-} explicitly as 
\begin{equation}\label{explicit_SRE}
\hat{P}_+(t)=\hat{P}_-(t)= ae^{\int_t^T [2r(s)+\sigma'(s,-x_0)\theta(s) ]ds}.
\end{equation}
Next, we verify that \eqref{explicit_SRE} are indeed solutions to the SREs \eqref{quad_min_con_dual_BSDE_P+} and \eqref{quad_min_con_dual_BSDE_P-} with $\Lambda_+(t)=0$ and $\Lambda_-(t)=0$, respectively. To this end, we consider the case $ x_0>0 $ and $ y<0 $. According to Theorem \ref{dual_to_primal}, we have
\begin{equation*}
\hat{X}(t)=\hat{p}_2(t), \forall t\in[0,T], a.e.
\end{equation*}
Hence, 
\begin{equation}\label{x_bsde}
\hat{X}(t)=E\left[ -\dfrac{\Gamma(T)Y(T)}{a\Gamma(t)}\bigg|\mathcal{F}_t \right]=-\dfrac{Y(t)}{a}E\left[\dfrac{\Gamma(T)Y(T)}{\Gamma(t)Y(t)}\bigg|\mathcal{F}_t \right],
\end{equation}
where $ \Gamma $ follows the SDE
\begin{equation*}
d\Gamma(t)=\Gamma(t)[-r(t)dt-\theta'(t)dW(t)], \forall t\in[0,T], \Gamma(0)=1.
\end{equation*}
Applying Ito's lemma, we obtain
\begin{equation}\label{GammaY}
d\Gamma(t)Y(t)=[-2r(t)-\theta'(t)\sigma(t,y)]Y(t)\Gamma(t)dt-[\sigma'(t,y)+\theta'(t)]Y(t)\Gamma(t)dW(t).
\end{equation}
Combining \eqref{x_bsde} and \eqref{GammaY},  we have
$$ \hat X(t) = -{Y(t)\over a} e^{\int_t^T [-2r(s)-\theta\rq{}(s)\sigma(s,y)] ds}.$$
Applying Ito's lemma again, we have  $\hat X$ satisfies the SDE
\begin{equation}\label{hat_x_explicit}
d\hat{X}(t)=[r(t)\hat{X}(t)+\theta'(t)\sigma(t,y)\hat{X}(t)]dt+\sigma'(t,y)\hat{X}(t)dW(t).
\end{equation}
Comparing \eqref{hat_x_explicit} with \eqref{wealth_process}, we conclude that
\begin{equation*}
\hat{\pi}'(t)=\sigma'(t,y)\sigma^{-1}(t)\hat{X}(t),
\end{equation*}
which implies that
\begin{equation}\label{optimal_xi_explicit}
\hat{\xi}_+'(t)=\dfrac{\hat{\pi}'(t)}{\hat{X}(t)}=\sigma'(t,y)\sigma^{-1}(t).
\end{equation}
Substituting \eqref{optimal_xi_explicit} back into \eqref{quad_min_con_eqn2}, we have
\begin{equation*}
H^*(t,P_+(t),\Lambda_+(t))=\sigma'(t,y)\theta(t)P_+(t).
\end{equation*}
Taking $ x_0<0 $ and following the same steps, we obtain
\begin{equation*}
H^*(t,P_-(t),\Lambda_-(t))=\sigma'(t,y)\theta(t)P_-(t).
\end{equation*}
Hence, we conclude that $\hat P_+(t)$ and $\hat P_-(t)$ defined in 
   \eqref{explicit_SRE} are indeed solutions to SREs \eqref{quad_min_con_dual_BSDE_P+} and \eqref{quad_min_con_dual_BSDE_P-}.

%%%%%%%%%%%%%%%%%%%%%%%%%%%%%%%%%%%%%
%%%%%%%%%%%%%%%%%%%%%%%%%%%%%%%%%%%%
\section{Proofs of the Main Results}\label{section5}
In this section we give proofs of the main results in Section \ref{mainresults}.

\bigskip\noindent
{\it Proof of Theorem \ref{primal_necesuff}}. 
Since the cost functional $ J$ is convex, according to \cite[Proposition 2.2.1]{ekelandtemam:convexanalysis}, a necessary and sufficient condition for $ \hat{\pi} $ to be optimal  is that
\begin{equation}\label{primal_smp_gat_1}
\langle J'(\hat{\pi}),\hat{\pi}-\pi \rangle \leq 0, \; \forall \pi\in\mathcal{A},
\end{equation}
where $J'(\hat{\pi})$ is the G\^{a}teaux-dirivative of $J$ at $\hat\pi$ and can be computed explicitly as \eqref{wealth_process} is a linear SDE  and $J$ is a quadratic functional. The optimality condition (\ref{primal_smp_gat_1}) can be written as
\begin{align}\nonumber
E\bigg[\displaystyle\int_0^T\bigg[ &Q(t)X^{\hat{\pi}}(t)\left(X^{\hat{\pi}}(t)-X^{\pi}(t)\right)+S'(t)\left(\hat\pi(t)\left(X^{\hat{\pi}}(t)-X^{\pi}(t)\right)
+\left(\hat{\pi}(t)-\pi(t)\right) X^{\hat\pi}(t)\right)\\\label{primal_smp_gat_2}
&+\left(\hat{\pi}'(t)-\pi'(t)\right)R(t)\hat{\pi}(t)\bigg]dt
+\left[ aX^{\hat\pi}(T) + c\right] \left(X^{\hat{\pi}}(T)-X^{\pi}(T)\right)\bigg]\leq 0,
\end{align} 
for all $ \pi\in\mathcal{A} $. Applying Ito's formula to $ X^{\hat{\pi}}(t)\hat{p}_1(t) $, we have
\begin{flalign}\notag
d(X^{\hat{\pi}}(t)\hat{p}_1(t))
=&\left[ \hat{p}_1(t)\hat{\pi}'(t)\sigma(t)\theta(t) + \hat{\pi}'(t)\sigma(t)\hat{q}_1(t)+Q(t)X^{\hat{\pi}}(t)^2+S'(t)X^{\hat{\pi}}(t)\hat{\pi}(t) \right]dt \\\label{primal_smp_ito_1}
&+ \left[ \hat{p}_1(t)\hat{\pi}'(t)\sigma(t)+\hat{q}_1'(t)X^{\hat{\pi}}(t) \right] dW(t).
\end{flalign}
Define the process $ \tilde S $ as
\begin{equation*}
\tilde S(t):= \int_0^t \left(\hat{p}_1(s)\hat{\pi}'(s)\sigma(s)+\hat{q}'_1(s)X^{\hat{\pi}}(s)\right)dW(s),\; 0\leq t\leq T.
\end{equation*}
Obviously, $\tilde S $ is a local martingale. To prove that $\tilde S $ is  a true martingale, it is sufficient to show that $ E\left[ \sup_{0\leq s\leq T}|\tilde S(s)| \right]<\infty $. According to the Burkholder-Davis-Gundy inequality \cite[Theorem 3.3.28]{karatzasshreve:stochasticcalculus}, it is sufficient to verify that
\begin{equation*}
E\left[ \left( \int_0^T [| \hat{p}_1(s)\pi'(s)\sigma(s) |^2+| \hat{q}_1(s)X^{\hat{\pi}}(s) |^2 ] ds \right)^{\frac{1}{2}} \right]<\infty.
\end{equation*}
Note that from \cite[Corollary 2.5.10]{kylovcontrolled}, we have that
$X^{\hat{\pi}}\in {\cal S}^2(0,T; \mathbb{R})$. 
Combining with $ p_1\in {\cal S}^2(0,T; \mathbb{R})$ and $q_1\in {\cal H}^2(0,t;\mathbb{R}^N)$ and by H\"{o}der's inequality, we have
\begin{align*}
&E\left[ \left( \int_0^T [ | \hat{p}_1(s)\pi'(s)\sigma(s) |^2+| \hat{q}_1(s)X^{\hat{\pi}}(s) |^2 ] ds \right)^{\frac{1}{2}} \right]\\
&\leq E\left[ \left( \sup_{0\leq s \leq T}|\hat{p}_1(s)|^2\int_0^T |\pi'(s)\sigma(s)|^2 ds+\sup_{0\leq s\leq T}|X^{\hat{\pi}}(s)|^2\int_0^T | q_1(s) |^2ds \right)^{\frac{1}{2}} \right]\\
&\leq \frac{1}{2}E\left[ \sup_{0\leq s\leq T}|\hat{p}_1(s)|^2 \right]+\frac{1}{2}E\left[ \int_0^T |\pi'(s)\sigma(s)|^2 ds \right] +\frac{1}{2}E\left[\sup_{0\leq s\leq T}|X^{\hat{\pi}}(s)|^2\right]+\frac{1}{2}E\left[\int_0^T|\hat{q}_1(s)|^2ds\right]\\
&<\infty,
\end{align*}
which implies that $\tilde S $ is  a true martingale. Taking expectation of $ X^{\hat{\pi}}(T)\hat{p}_1(T) $, we have
\begin{align}\label{primal_smp_ito_2}
E\left[ X^{\hat{\pi}}(T)\hat{p}_1(T) \right] = x_0\hat p_1(0) + E \bigg[ \int_0^T &\bigg[ \hat{p}_1(t)\hat{\pi}'(t)\sigma(t)\theta(t) + \hat{\pi}'(t)\sigma(t)\hat{q}_1(t)\\\notag
&+Q(t)X^{\hat{\pi}}(t)^2+S'(t)X^{\hat{\pi}}(t)\hat{\pi}(t)\bigg]dt \bigg]
\end{align}
Similarly, applying Ito's formula to $ X^{\pi}(t)\hat p_1(t) $ and taking expectation, we obtain that
\begin{align}\label{primal_smp_ito_3}
E\left[ X^{\pi}(T)\hat{p}_1(T) \right] = x_0\hat p_1(0) + E \bigg[ \int_0^T & \bigg[ \hat{p}_1(t)\pi'(t)\sigma(t)\theta(t) + \pi'(t)\sigma(t)\hat{q}_1(t)\\\notag
&+Q(t)X^{\hat{\pi}}(t)X^{\pi}(t)+S'(t)X^{\pi}(t)\hat{\pi}(t) \bigg]dt \bigg]
\end{align}
Combining \eqref{primal_smp_gat_2},\eqref{primal_smp_ito_2} and \eqref{primal_smp_ito_3}, we obtain that $ \hat{\pi}\in\mathcal{A} $ is an optimal control of the primal problem if and only if 
\begin{equation}\label{primal_smp_Ham_1}
E\left[ \int_0^T \left[ \hat{\pi}'(t)-\pi'(t) \right]\left[ \hat{p_1}(t)\sigma(t)\theta(t)+\sigma(t)\hat{q}_1(t)+S(t)X^{\hat{\pi}}(t)+R(t)\hat{\pi}(t) \right] dt \right] \geq 0
\end{equation}
for all $\pi\in\mathcal{A} $.
Define the Hamiltonian function $ H:\Omega\times [0,T]\times \mathbb{R} \times  \mathbb{R}^N \rightarrow \mathbb{R} $ as
\begin{equation*}
H(\omega, t,x,\pi):= \pi'\left[ \hat{p_1}(t)\sigma(t)\theta(t)+\sigma(t)\hat{q}_1(t)+S(t)x+\dfrac{1}{2}R(t)\pi \right]
\end{equation*}
and define the set-valued map $ F: \Omega\times [0,T] \rightarrow K $ as 
\begin{equation*}
F(\omega,t):= \left\{ \pi\in K:  \left[\hat{\pi}'(t)-\pi'\right]H_{\pi}\left(\omega, t,X^{\hat{\pi}}(t),\hat{\pi}(t) \right) \geq 0 \right\}.
\end{equation*}
Then $ F $ is a measurable set-valued map, see \cite[Definition 8.1.1]{aubin:setvaluedanalysis}.
Given $ \pi\in K $, define the set $ \mathbb{B}^{\pi} $ as
\begin{align*}
\mathbb{B}^{\pi} &:= \left\{ (\omega,t)\in\Omega\times [0,T] : \left[ \hat{\pi}'(t)-\pi' \right]H_{\pi}(t,X^{\hat{\pi}}(t),\hat{\pi}(t))< 0 \right\}.
\end{align*}
According to   \cite[Theorem 8.14]{aubin:setvaluedanalysis}, $\mathbb{B}^{\pi}_t\in\mathcal{F}_t $ for  $ t\in[0,T]$. Define an adapted control $ \tilde{\pi}: \Omega\times [0,T] \rightarrow K $ as
\begin{equation*}
\tilde{\pi}(\omega,t):=
\left\{
\begin{array}{l}
\pi \hspace{16mm} \textit{            if }(\omega,t)\in\mathbb{B}^\pi\\
\hat{\pi}(\omega,t),\hspace{5mm} \textit{            otherwise.}
\end{array}
\right.
\end{equation*}
Suppose that $ \left(\mathbb{P}\otimes Leb\right)(\mathbb{B}^{\pi})>0 $, then
\begin{equation*}
E\left[ \int_0^T [\hat{\pi}'(t)-\tilde{\pi}'(t)]H_{\pi}(t,X^{\hat{\pi}}(t),\hat{\pi}(t))dt \right]< 0,
\end{equation*}
contradicting with \eqref{primal_smp_Ham_1}. Hence, we conclude that $ \left(\mathbb{P}\otimes Leb\right)(\mathbb{B}^{\pi})=0 $ for any fixed  $  \pi\in K $. Moreover, since $ K $ is separable, we conclude that 
\begin{equation*}
[\hat{\pi}'(t)-\pi']H_{\pi}(t,X^{\hat{\pi}}(t),\hat{\pi}(t))\geq 0, \; \forall \pi\in K
\end{equation*}
for $\left(\mathbb{P}\otimes Leb\right)$-a.e. $ (\omega,t)\in \Omega\times [0,T] $.
\hfill $\Box$

%%%%%%%%%%%%%%%%%%%%%%%%%%%%%%%
\bigskip\noindent
{\it Proof of Theorem \ref{Dual_Nece_Cond}}. 
Let $ (\hat{y},\hat{\alpha},\hat{\beta}) $ be  optimal for the dual problem and $ (Y^{(\hat{y},\hat{\alpha},\hat{\beta})}, 
\hat{p}_2,\hat{q}_2) $ satisfy  \eqref{dual_FBSDE}. 
Let $(y,\alpha,\beta)\in \mathbb{B}$ and $Y^{(y,\alpha,\beta)}$ satisfy the SDE (\ref{dual_fsde}). 
Applying Ito's formula to $ \hat{p}_2(t)Y^{(y,\alpha,\beta)}(t) $, we have
\begin{flalign*}
d(\hat{p}_2(t)Y^{(y,\alpha,\beta)}(t)) 
=&\left[{\alpha}(t)\hat{p}_2(t)+\hat{q}_2'(t)\sigma^{-1}(t){\beta}(t)\right]dt \\
&{} + \left[ \hat{q}_2'(t)Y^{(y,\alpha,\beta)}(t) + \left(\sigma^{-1}(t){\beta}(t)-\theta(t)Y^{(y,\alpha,\beta)}(t) \right)'\hat{p}_2(t) \right]dW(t).
\end{flalign*}
It can be shown, following a similar argument as in  the proof of Theorem \ref{primal_necesuff}, that
 the process 
\begin{equation*}
 \int_0^t \left[  \hat{q}_2'(s)Y^{(y,\alpha,\beta)}(s)+( \sigma^{-1}(s){\beta}(s)-\theta(s)Y^{(y,\alpha,\beta)}(s) )'\hat{p}_2(s) \right]dW(s), \ 0\leq t\leq T, 
\end{equation*}
is a martingale. 
Taking the expectation of $ \hat{p}_2(T)Y^{(y,\alpha,\beta)}(T) $, we obtain
\begin{equation}\label{nece_dual_proof_adjoint_exp}
E\left[ \hat{p}_2(T)Y^{(y,\alpha,\beta)}(T) \right] = \hat{p}_2(0){y} + E\left[ \int_0^T \left[{\alpha}(t)\hat{p}_2(t)+\hat{q}_2'(t)\sigma^{-1}(t){\beta}(t)\right] dt \right].
\end{equation}
For $ \varepsilon>0$ define $ (y^{\varepsilon},\alpha^{\varepsilon},\beta^{\varepsilon})
\in \mathbb{B}$ by
\begin{equation*}
 (y^{\varepsilon},\alpha^{\varepsilon},\beta^{\varepsilon})
= (\hat{y},\hat{\alpha},\hat{\beta}) + \varepsilon (y,\alpha,\beta).
\end{equation*}
Then
\begin{equation*}
Y^{(y^{\varepsilon},\alpha^{\varepsilon},\beta^{\varepsilon})}(t) = Y^{(\hat{y},\hat{\alpha},\hat{\beta})}(t) + \varepsilon Y^{(y,\alpha,\beta)}(t).
\end{equation*}
Since $ (\hat{y},\hat{\alpha},\hat{\beta}) $ is optimal, we have
\begin{equation*}
\dfrac{1}{\varepsilon} \left[ \Psi(y^{\varepsilon},\alpha^{\varepsilon},\beta^{\varepsilon})-\Psi(\hat{y},\hat{\alpha},\hat{\beta}) \right] \geq 0.
\end{equation*}
Substituting (\ref{dual_cost_function}) into the above inequality, also noting $\hat{p}_2(T)=-\dfrac{Y^{(\hat{y},\hat{\alpha},\hat{\beta})}(T) + c}{a} $, we get 
\begin{equation}\label{necedual_proof_2}
\begin{array}{c}
y x_0 -E\left[ Y^{(y,\alpha,\beta)}(T)\hat{p}_2(T) \right] + \varepsilon E\left[ \dfrac{ Y^{(y,\alpha,\beta)}(T)^2}{2a} \right] \\
{} + \dfrac{1}{\varepsilon}E\left[ \displaystyle\int_0^T \left[\phi(\alpha^{\varepsilon}(t),\beta^{\varepsilon}(t))-\phi(\hat{\alpha}(t),\hat{\beta}(t))\right] dt \right] \geq 0.
\end{array}
\end{equation}
Combining  \eqref{necedual_proof_2} with \eqref{nece_dual_proof_adjoint_exp} and then letting $\varepsilon\downarrow 0 $, we have
$$
y\left( x_0-\hat{p}_2(0) \right)+\lim_{\varepsilon\downarrow 0}E\left[ \displaystyle\int_0^T [ \tilde g(t,\varepsilon) - \hat{q}_2'(t)\sigma^{-1}(t)\beta(t)-\alpha(t)\hat{p}_2(t)]dt \right] \geq 0,
$$
where $\tilde g(\omega,t,\varepsilon)=\dfrac{1}{\varepsilon}(\phi(t,\alpha^{\varepsilon}(t),\beta^{\varepsilon}(t))-\phi(t,\hat{\alpha}(t),\hat{\beta}(t)))$. 
Let $ \alpha(t)=0 $ and $\beta(t)=0 $ for $ t\in[0,T] $, we get
\begin{equation*}
y(x_0-\hat{p}_2(0))\geq 0, \ \forall y\in\mathbb{R}.
\end{equation*}
Hence,  $ \hat{p}_2(0)=x_0 $. 
Recall that the function $ f $ in \eqref{cost_functions_f_g} is convex and the  set $K$ is convex, according to  \cite[Theorem 26.3]{rockafeller:convexanalysis}, $ \phi $ has directional derivative at  $(\hat{\alpha}(t),\hat{\beta}(t))$ in any direction 
$\left(\mathbb{P}\otimes Leb\right)$ a.e. on $\Omega\times [0.T]$. Since $\varepsilon\to \tilde g(\omega,t,\varepsilon)$ is a nondecreasing function, Assumption \ref{assum_monoton} and the monotone convergence theorem imply that 
\begin{equation}\label{necedual_proof_5}
E\left[ \int_0^T \left[ \phi^{o}\left(t,\hat{\alpha}(t),\hat{\beta}(t);{\alpha}(t),{\beta}(t)\right)-\hat{q}_2'(t)\sigma^{-1}(t){\beta}(t)-{\alpha}(t)\hat{p}_2(t) \right]dt \right] \geq 0，
\end{equation}
where
\begin{equation*}
\phi^{o}\left(\omega,t,\hat{\alpha},\hat{\beta};{\alpha},{\beta}\right):= \lim_{\varepsilon\downarrow 0}\dfrac{\phi(t,\hat{\alpha}+\varepsilon{\alpha},\hat{\beta}+\varepsilon{\beta})-\phi( t,\hat{\alpha},\hat{\beta})}{\varepsilon}.
\end{equation*}
For $ (\alpha,\beta)\in\mathbb{R}\times\mathbb{R}^N $, define the set $ \mathbb{B}^{(\alpha,\beta)} $ as
\begin{equation*}
\mathbb{B}^{(\alpha,\beta)}:=\left\{ (\omega,t)\in \Omega\times [0,T]: \phi^{o}\left(\hat{\alpha}(t),\hat{\beta}(t);\alpha,\beta\right)-\hat{q}_2'(t)\sigma^{-1}(t)\beta-\alpha\hat{p}_2(t)<0 \right\}.
\end{equation*}
Using a similar argument as in the proof of Theorem \ref{primal_necesuff}, we conclude that $\mathbb{B}^{(\alpha,\beta)}_t \in \mathcal{F}_t $ for $ t\in[0,T]$  and 
 $\left(\mathbb{P}\otimes Leb\right)(\mathbb{B}^{(\alpha,\beta)})=0 $ for all $(\alpha,\beta)\in\mathbb{R}\times\mathbb{R}^N $. Equivalently, given any $ (\alpha,\beta)\in\mathbb{R}\times\mathbb{R}^N $,
\begin{equation*}
\phi^{o}\left(\hat{\alpha}(t),\hat{\beta}(t);\alpha,\beta\right)-\hat{q}_2'(t)\sigma^{-1}(t)\beta-\alpha\hat{p}_2(t)\geq 0,
\end{equation*}
for $ \left(\mathbb{P}\otimes Leb\right) $-a.e. $ (\omega,t)\in \Omega\times [0,T] $. In addition, by the separability of the space $ \mathbb{R}^{N+1} $, we conclude that
\begin{equation*}
\phi^{o}\left(\hat{\alpha}(t),\hat{\beta}(t);\alpha,\beta\right)-\hat{q}_2'(t)\sigma^{-1}(t)\beta-\alpha\hat{p}_2(t)\geq 0, \forall (\alpha,\beta)\in\mathbb{R}\times\mathbb{R}^N
\end{equation*}
for $ \left(\mathbb{P}\otimes Leb\right) $-a.e. $ (\omega,t)\in \Omega\times [0,T] $. By the definition of Clarke's generalized gradient \cite[Chapter 2]{clarke:optimization}, the above condition can be written as
\begin{equation*}
\left( \hat{p}_2(t),[\sigma']^{-1}(t)\hat{q}_2(t) \right)\in\partial\phi\left( \hat{\alpha}(t),\hat{\beta}(t) \right).
\end{equation*}
According to  \cite[Theorem 23.5]{rockafeller:convexanalysis} , we conclude that $ x\hat{\alpha}(t)+\pi'\hat{\beta}(t)-\tilde f(t,x,\pi) $ achieves the supreme at $ (\hat{p}_2(t),[\sigma']^{-1}(t)\hat{q}_2(t)) $
for $ \left(\mathbb{P}\otimes Leb\right) $-a.e. $ (\omega,t)\in \Omega\times [0,T] $, which implies  
\begin{equation*}
[\sigma']^{-1}(t)\hat{q}_2(t)\in K,
\end{equation*}
for $ \left(\mathbb{P}\otimes Leb\right) $-a.e. $ (\omega,t)\in \Omega\times [0,T] $.
We have proved the necessary condition.

Let $ (\hat{y},\hat{\alpha},\hat{\beta})\in\mathbb{B}$ be an admissible control to the dual problem with processes $ \left( Y^{(\hat{y},\hat{\alpha},\hat{\beta})},\hat{p}_2,\hat{q}_2 \right) $ satisfying  the FBSDE \eqref{dual_FBSDE} and conditions \eqref{dual_conditions}. Define the Hamiltonian function $ H:\Omega\times [0,T]\times\mathbb{R}\times\mathbb{R}^N \rightarrow \mathbb{R} $ as 
\begin{equation*}
H(\omega, t,\alpha,\beta)=\hat{q}_2'(t)\sigma^{-1}(t)\beta+\alpha\hat{p}_2(t)-\phi(t,\alpha,\beta).
\end{equation*}
By condition \eqref{dual_conditions} and the classical result in duality theorem, we have
\begin{equation}\label{dual_H_condition}
\left(0,0\right)\in\partial H\left( \hat{\alpha}(t),\hat{\beta}(t) \right),
\end{equation}
for $ \left(\mathbb{P}\otimes Leb\right) $-a.e. $ (\omega,t)\in \Omega\times [0,T] $. Given any admissible control $( y,\alpha,\beta )\in\mathbb{B}$, define 
\begin{equation*}
\tilde{y}=y-\hat{y}, \ \tilde{\alpha}=\alpha-\hat{\alpha}, \ \tilde{\beta}=\beta-\hat{\beta}.
\end{equation*}
Let $ Y^{( y,\alpha,\beta )} $ and $ Y^{(\tilde{y},\tilde{\alpha},\tilde{\beta})} $ be the associated state processes satisfying the SDE \eqref{dual_fsde}. According to the definition of the dual problem, also noting $m_T$ is a convex function, we have
\begin{align*}
\Psi\left( y,\alpha,\beta \right)-\Psi\left( \hat{y},\hat{\alpha},\hat{\beta} \right)
\geq & \tilde{y}x_0+E\left[ Y^{(\tilde{y},\tilde{\alpha},\tilde{\beta})}(T)\dfrac{Y^{(\hat{y},\hat{\alpha},\hat{\beta})}(T)+c}{a} \right]\\
&{}+E\left[ \displaystyle\int_0^T\left[ \phi(t,\alpha(t),\beta(t))-\phi(t,\hat{\alpha}(t),\hat{\beta}(t)) \right]dt \right].
\end{align*}
Replacing $ \dfrac{Y^{(\hat{y},\hat{\alpha},\hat{\beta})}+c}{a} $ with $ -\hat{p}_2(T) $ in the above inequality, we have
\begin{align*}
\Psi\left( y,\alpha,\beta \right)-\Psi\left( \hat{y},\hat{\alpha},\hat{\beta} \right) \geq & \tilde{y}(x_0-\hat{p}_2(0))+E\left[ \int_0^T\left[ \hat{q}_2'(t)\sigma^{-1}(t)\tilde{\beta}(t)-\tilde{\alpha}(t)\hat{q}_2(t) \right]dt \right]\\
&{} +E\bigg[ \displaystyle\int_0^T\left[ \phi(t,\alpha(t),\beta(t))-\phi(t,\hat{\alpha}(t),\hat{\beta}(t)) \right]dt \bigg]\\
=& E\left[ \int_0^T\left[ -H(t,\alpha(t),\beta(t))+H(t,\hat{\alpha}(t),\hat{\beta}(t)) \right] dt  \right].
\end{align*}
According to condition \eqref{dual_H_condition} and the concavity of $ H $, we conclude that
\begin{equation*}
\Psi\left(\bar{y},\bar{\alpha},\bar{\beta}\right)-\Psi\left( \hat{y},\hat{\alpha},\hat{\beta} \right)\geq 0.
\end{equation*}
Since $ \left( y,\alpha,\beta \right)\in\mathbb{B} $ is arbitrary, we have proved the sufficient condition. \hfill$\Box$

%%%%%%%%%%%%%%%%%%%%%%%%%%%
\bigskip\noindent
{\it Proof of Theorem \ref{dual_to_primal}}. 
Suppose that $ (\hat{y},\hat{\alpha},\hat{\beta}) \in \mathbb{B}$   is optimal for the dual problem. By Theorem \ref{Dual_Nece_Cond}, the process $ \left( Y^{(\hat{y},\hat{\alpha},\hat{\beta})}(t),\hat{p}_2(t),\hat{q}_2(t) \right) $ solves the dual FBSDE (\ref{dual_FBSDE}) and satisfies condition \eqref{dual_conditions}. 
Define $\hat\pi(t)$ and $(X^{\hat{\pi}}(t), \hat{p}_1(t), \hat{q}_1(t))$ as in (\ref{relations_pi_defn}) and (\ref{dual_to_primal_soln}), respectively. 
According to Theorem \ref{Dual_Nece_Cond} and condition \eqref{dual_conditions}, we have $ \hat{\pi}(t)\in K \ \mathbb{P}$-a.s. and 
\begin{equation*}
 \left( X^{\hat{\pi}}(t),\hat{\pi}(t) \right)\in \partial\phi\left( \hat{\alpha}(t),\hat{\beta}(t) \right).
\end{equation*}
The classical result in duality theory implies
\begin{equation*}
\left( \hat{\alpha}(t),\hat{\beta}(t) \right)\in \partial \tilde f\left(X^{\hat{\pi}}(t),\hat{\pi}(t) \right).
\end{equation*}
Recall that $\tilde f(\omega,t,x,\pi)=f(\omega,t,x,\pi) + \Psi_K(\pi)$, we can get
\begin{eqnarray}\label{dual_to_primal_alpha}
\hat{\alpha}(t)&=&Q(t)X^{\hat{\pi}}(t)+S'(t)\hat{\pi}(t),\\
\label{dual_to_primal_beta}
\hat{\beta}(t)&\in& S(t)X^{\hat{\pi}}(t)+ R(t)\hat{\pi}(t) +  \partial \Phi_K( \hat{\pi}(t))
\label{dual_to_primal_beta}
\end{eqnarray}
for $ \left(\mathbb{P}\otimes Leb\right) $-a.e. $ (\omega,t)\in \Omega\times [0,T] $. Combining (\ref{relations_pi_defn}), \eqref{dual_to_primal_soln} and \eqref{dual_to_primal_alpha}, we obtain that $ \left( X^{\hat{\pi}},\hat{p}_1,\hat{q}_1 \right) $ solves the primal FBSDE (\ref{primal_necesuff_FBSDE}).
Moreover, combining  \eqref{dual_to_primal_soln} and \eqref{dual_to_primal_beta} gives condition \eqref{primal_condition}. Using the sufficient condition for optimality in Theorem \ref{primal_necesuff}, we conclude that $ \hat{\pi} $ is indeed an optimal control for the primal problem. \hfill$\Box$

%%%%%%%%%%%%%%%%%%%%%%%%%%%
\bigskip\noindent
{\it Proof of Theorem \ref{primaltodual}}. 
Suppose that $ \hat{\pi}\in\mathcal{A} $ is an optimal control for the primal problem. By Theorem \ref{primal_necesuff},  the process $ \left( X^{\hat{\pi}}(t),\hat{p}_1(t),\hat{q}_1(t) \right) $ solves the FBSDE (\ref{primal_necesuff_FBSDE}) and satisfies condition (\ref{primal_condition}). 
Define $(\hat{y}, \hat{\alpha}(t), \hat{\beta}(t))$ and $(Y^{(\hat{y},\hat{\alpha},\hat{\beta})}(t), \hat{p}_2(t), \hat{q}_2(t))$ as in (\ref{primal_to_dual_control}) and (\ref{primal_to_dual_cond}), respectively. 
Substituting them into the primal FBSDE  (\ref{primal_necesuff_FBSDE}), we obtain that $ \left( Y^{(\hat{y},\hat{\alpha},\hat{\beta})},\hat{p}_2,\hat{q}_2 \right) $ satisfies the dual FBSDE (\ref{dual_FBSDE}). 
By the construction in (\ref{primal_to_dual_control}) and  \eqref{primal_to_dual_cond}, the first two conditions in \eqref{dual_conditions} are satisfied. In addition, by condition \eqref{primal_condition} and the concavity of $ H $, we have 
\begin{equation*}
\hat{\beta}(t)\in\partial_{\pi}\tilde f\left(X^{\hat{\pi}}(t),\hat{\pi}(t)\right).
\end{equation*}
Consequently, we have
\begin{equation*}
\left(\hat{\alpha}(t),\hat{\beta}(t)\right)\in\partial \tilde f\left(X^{\hat{\pi}}(t),\hat{\pi}(t) \right),
\end{equation*}
which is equivalent to the third condition in \eqref{dual_conditions}. By Theorem   \ref{Dual_Nece_Cond}, we conclude that $ \left(\hat{y},\hat{\alpha},\hat{\beta}\right) $ is indeed an optimal control to the dual problem.
%with
%\begin{equation}
%Y^{(\hat{y},\hat{\varphi})}(t) = \hat{p}_1(t) \ \forall t\in[0,T] \ \mathbb{P}\textit{-a.e.}
%\end{equation}
%Moreover, according to the definition \eqref{relation_dual_defn} and conditions \eqref{dual_conditions}, we have
%\begin{equation}\label{relation_todual_cond_1}
%\left\{
%\begin{array}{l}\vspace{1mm}
%\hat{p}_2(0)=x_0\\
%\left[\hat{q}_2(t)\sigma^{-1}(t)\right]^T \in K \ \forall t\in[0,T]\ \mathbb{P}\textit{-a.e.}
%\end{array}
%\right.
%\end{equation}
%Last but not least, substituting the definition of $ \hat{\varphi} $ and $ \hat{q}_2 $ into condition \eqref{primal_condition}, we have
%\begin{equation*}
%-\hat{\varphi}(t)\hat{q}_2(t)\sigma^{-1}(t)\geq -\hat{\varphi}(t)\pi \ \textit{ for }\forall\pi\in K.
%\end{equation*}
%Consequently, we obtain 
%\begin{equation}\label{relation_todual_cond_2}
%\delta\left(\hat{\varphi}(t)\right)+\hat{q}_2(t)\sigma^{-1}(t)\hat{\varphi}(t)=0 \ \forall t\in[0,T] \ \mathbb{P}\textit{-a.e.}
%\end{equation}
%Hence, combining \eqref{relations_dual_FBSDE},\eqref{relation_todual_cond_1}, \eqref{relation_todual_cond_2} and according to the necessary conditions in Theorem \ref{theorem_suff_dual}, we conclude that $ \left( \hat{y},\hat{\varphi} \right) $ defined as in \eqref{relation_dual_defn} is an optimal control to the dual problem.
\hfill$\Box$

\section{Conclusion}
In this paper, we discuss a continuous-time constrained quadratic risk minimization problem  with random market coefficients. Following a convex duality approach, we derive the necessary and sufficient optimality conditions for primal and dual problems in terms of FBSDEs plus additional conditions. We establish an explicit connection between  primal and dual problems in terms of their associated forward backward systems. We prove that the optimal controls of primal and dual problems can be written as functions of adjoint processes of their counterpart. Moreover, we also find that the optimal state processes  for both problems coincide with the optimal adjoint processes of their counterpart. We solve cone-constrained quadratic risk minimization problems using the dual approach. We recover the solutions to the extended SREs introduced in the literature from the optimal solutions to the dual problem and find  the closed-form solutions to the extended SREs  when the coefficients are deterministic.
There are still many open questions. For example,   can the results be extended to incomplete market models (not a complete market model as in the paper)?  Can the dual problem be solved for a bounded control set $K$ (not a cone)?  Can solutions be found to the primal and dual FBSDEs with random  coefficients (not deterministic coefficients)?  We  leave these outstanding problems  in  future works.

\bigskip\noindent
{\bf Acknowledgments}. The authors are  grateful to  three anonymous reviewers whose constructive comments and suggestions have helped to improve the paper of the previous version.

% \bibliographystyle{plain-annote}
% \bibliography{bib_ref}

\begin{thebibliography}{9}
\bibitem{aubin:setvaluedanalysis}
J-P. Aubin and H. Frankowska, \emph{Set-Valued Analysis}, Birkh\"{a}user, 1990.

\bibitem{bismut:convexdual} J. M. Bismut, \emph{Conjugate convex functions in optimal stochastic control}, J. Math. Anal. Appl.,  \textbf{44} (1973), pp. 384--404.

\bibitem{cadenillaskaratzas:convexsmp} A. Cadenillas and I. Karatzas, \emph{The stochastic maximum principle for linear convex optimal control with random coefficients}, SIAM J. Control Optim.,  \textbf{33} (1995), pp. 590--624.

\bibitem{Czichowsky:convexdualconstraints} C. Czichowsky and M. Schweizer, \emph{Convex duality in mean-variance hedging under convex trading constraints}, Adv. in Appl. Probab.,  \textbf{44} (2012), pp. 1084--1112.


\bibitem{cwz} C. Czichowsky,
N. Westray and  H Zheng, \emph{Convergence in the semimartingale topology and constrained portfolios},
Seminaire de Probabilities, \textbf{XLIII} (2011), pp. 395--412.


\bibitem{clarke:optimization}
F. H. Clarke, \emph{Optimization and Nonsmooth Analysis}, SIAM, 1990.

\bibitem{ekelandtemam:convexanalysis}
I. Ekeland and R. Tamam, \emph{Convex Analysis and Variational Problems}, SIAM, 1987.

\bibitem{huzhou:constrainedmv} Y. Hu and X. Y. Zhou, \emph{Constrained stochastic lq control with random coefficients, and application to portfolio selection}, SIAM J. Control Optim.,  \textbf{44} (2005), pp. 444--446.

\bibitem{karatzasshreve:stochasticcalculus}
I. Karatzas and S. E. Shreve, \emph{Brownian Motion and Stochastic Calculus}, Springer, 1998.

\bibitem{karatzasshreve:mathfinance}
I. Karatzas and S. E. Shreve, \emph{Methods of Mathematical Finance}, Springer, 2001.

\bibitem{KS99} D. Kramkov and W. Schachermayer, \emph{The asymptotic elasticity of utility functions and optimal investment in incomplete markets}, Ann. Appl. Probab., \textbf{9} (1999), pp. 904--950.

\bibitem{KS03} D. Kramkov and W. Schachermayer, \emph{Necessary and sufficient conditions in the problem of optimal investment in incomplete markets}, Ann. Appl. Probab., \textbf{13} (2003), pp. 1504--1516.

\bibitem{kylovcontrolled}
N. V. Krylov, \emph{Controlled Diffusion Processes}, Springer-Verlag, 1980.


\bibitem{heunislabbe:constrainedMV} C. Labb\'{e} and A. J. Heunis, \emph{Convex duality in constrained mean-variance portfolio optimization}, Adv. in Appl. Probab.,  \textbf{39} (2007), pp. 77--104.

\bibitem{oksendal:robustduality} B. {\O}ksendal and A. Sulem, \emph{A stochastic control approach to robust duality in utility maximization}, preprint (2013), available at \url{http://arxiv.org/abs/1304.5040}.

\bibitem{pham:contimuoustimeSC}
H. Pham, \emph{Continuous-time Stochastic Control and Optimization with Financial Applications}, Springer, 2009.

\bibitem{rockafeller:convexanalysis}
R. T. Rockafeller, \emph{Convex Analysis}, Princeton University Press, 1970.

\bibitem{rogers:constrainteddual} L. C. G. Rogers, \emph{Duality in constrained optimal investment and consumption problems: a synthesis}, in Paris-Princeton Lectures on Mathematical Finance, Springer, Berlin, 2002, pp. 95-131. 

\bibitem{schweizer:mvhedging}
M. Schweizer, \emph{Mean-variance hedging}, Encyclopedia of Quantitative Finance, 2010, pp. 1177-1181. 


\bibitem{yong2006} J. Yong, \emph{Linear forward-backward stochastic differential equations with random coefficients}, Probab. Theory Related Fields,  \textbf{135} (2006), pp. 53--83.

\bibitem{yong.zhou:stochasticcontrols}
J. Yong and X.Y. Zhou, \emph{Stochastic Controls: Hamiltonian Systems and HJB Equations}, Springer, 1999.

\bibitem{zhoulim:MVrandom} X. Y. Zhou and A. Lim, \emph{Mean-variance portfolio selection with random parameters in a complete market}, Math. Oper. Res.,  \textbf{27} (2002), pp. 101--120.

\end{thebibliography}

\end{document}